




\documentclass[sigconf]{aamas} 


\usepackage{balance} 
\usepackage{amsfonts}


\setcopyright{ifaamas}
\copyrightyear{2024}
\acmYear{2024}
\acmDOI{}
\acmPrice{}
\acmISBN{}




\addtolength{\topmargin}{-0.25in}
\addtolength{\textheight}{+0.25in}
\addtolength{\textwidth}{+0.3cm}
\addtolength{\leftmargin}{-0.3in}
\addtolength{\rightmargin}{-0.35in}

\usepackage{amsmath}
\usepackage{relsize}
\usepackage{enumerate}

\title[AAMAS-2024 Formatting Instructions]{Majority-based Preference Diffusion on Social Networks}


\author{Ahad N. Zehmakan}
\affiliation{
  \institution{The Australian National University}
  \city{}
  \country{}}
\email{ahadn.zehmakan@anu.edu.au}



\begin{abstract}
We study a majority based preference diffusion model in which the members of a social network update their preferences based on those of their connections. Consider an undirected graph where each node has a strict linear order over a set of $\alpha$ alternatives. At each round, a node randomly selects two adjacent alternatives and updates their relative order with the majority view of its neighbors.

We bound the convergence time of the process in terms of the number of nodes/edges and $\alpha$. Furthermore, we study the minimum cost to ensure that a desired alternative will ``win'' the process, where occupying each position in a preference order of a node has a cost. We prove tight bounds on the minimum cost for general graphs and graphs with strong expansion properties.

Furthermore, we investigate a more light-weight process where each node chooses one of its neighbors uniformly at random and copies its order fully with some fixed probability and remains unchanged otherwise. We characterize the convergence properties of this process, namely convergence time and stable states, using Martingale and reversible Markov chain analysis.

Finally, we present the outcomes of our experiments conducted on different synthetic random graph models and graph data from online social platforms. These experiments not only support our theoretical findings, but also shed some light on some other fundamental problems, such as designing powerful countermeasures.
\end{abstract}





         
\newcommand{\BibTeX}{\rm B\kern-.05em{\sc i\kern-.025em b}\kern-.08em\TeX}


\begin{document}


\pagestyle{fancy}
\fancyhead{}


\maketitle 


\section{Introduction}
Humans constantly form and update their preferences on different topics. In the process of making such decisions, we tend to rely not only on our own personal judgment and knowledge, but also that of others, especially those whose opinion we value and trust. As a result, opinion diffusion, influence propagation, and (mis)-information spreading can affect different aspects of our lives from economy and defense to fashion and personal affairs.

Recent years have witnessed a booming development of online social networking platforms. The enormous popularity of these platforms has led to fundamental changes in how humans share and form opinions. Social phenomena such as disagreement and polarization that have existed in human societies for millennia, are now taking place in an online virtual world and are tightly woven into everyday life, with a substantial impact on society.

The motive to gain insights on how opinions are shaped and evolved in multi-agent systems has been the driving force behind an interdisciplinary research effort in diverse areas such as sociology~\cite{moussaid2013social}, economics~\cite{jackson2011overview}, political science~\cite{marsh1999state}, mathematics~\cite{balogh2012sharp}, physics~\cite{galam2008sociophysics, n2020rumor}, and computer science~\cite{chistikov2020convergence, liu2023fast}. Within the field of computer science, especially computational social choice and algorithmic game theory, there has been a rising interest in developing and analyzing mathematical models which simulate the opinion diffusion in a network of individuals, cf.~\cite{bredereck2017manipulating, out2021majority, li2023graph}. Of course, in reality the opinion formation processes are too complex to be expressed in purely theoretical terms, but the goal is to shed some light on their general principles, which are otherwise hidden by the convolution of the full phenomenon.

The following generic and abstract model is the foundation for most of the proposed opinion diffusion models. Consider a graph where each node has a color and the nodes change their color according to an updating rule, which is a function of the color of the neighboring nodes, in a defined order. The graph is meant to represent a social network, where the individuals are modeled as nodes and edges indicate relations between them, e.g., friendship, common interests, or advice. The color of a node stands for its preference about a certain topic, e.g., an order over a set of candidates.

In the abundance of opinion diffusion models, cf.~\cite{zehmakan2023random, li2012phase, chistikov2020convergence, zhu2022nearly, zhou2021maximizing, liu2018active, okawa2022predicting, valentini2014self}, the majority-based models, where an individual updates its opinion to reflect the majority view among its connections, have gained substantial popularity and interest. In most of these models, the individual opinions are modelled as binary views on a given issue, that is, a node has one of the two colors black and white. However, opinions are sometimes complex objects that cannot be accurately modelled in binary terms. As a result, there is an emerging line of research, cf.~\cite{brill2016pairwise,hassanzadeh2013building, brill2018pairwise}, which studies the models where the preferences are expressed as linear orders over a set of alternatives, following standard conventions in voting theory, cf.~\cite{taylor2005social, rossi2011short}.
In the present paper, we focus on a model, called the Preference Diffusion model, which falls under the umbrella of the above line of research, that is, the nodes have an ordered list of preferences and follow a majority updating rule. This model was proposed in~\cite{hassanzadeh2013building} and was generalized by~\cite{brill2016pairwise}.

Arguably, the most well-studied problem in the area of opinion diffusion models is determining the convergence properties of the opinion dynamics: Is convergence to stable states guaranteed and if yes, what are the upper and lower bounds on the convergence time? Since in most cases the opinion dynamics can be modelled as a Markov process, this problem is usually equivalent to determining the stationary distribution and convergence time of the corresponding Markov chain, cf.~\cite{frischknecht2013convergence}.

Marketing campaigns routinely use online social platforms to sway people’s opinions in their favor, for instance by targeting segments of members with free or discount samples of their products or misleading information. The ultimate goal is to trigger a large cascade of product adoptions through the word-of-mouth effects by targeting a small set of influential individuals.
Therefore, a question which has been studied extensively is finding the minimum cost of manipulating the preferences of the individuals to ensure that a favored alternative dominates once the process ends, cf.~\cite{avin2019majority, faliszewski2022opinion}.

Our main focus is to address the above problems for the Preference Diffusion model by providing theoretical findings (building on various graph and probabilistic tools) and conducting experiments on data from real-world social networks.


\vspace{-0.3cm}
\subsection{Preliminaries}
\label{preliminaries}
\textbf{Graph Definitions.} Let $G=\left(V,E\right)$ be a simple undirected graph, where $V:=\{v_1,\cdots,v_n\}$ and $E\subseteq \{\{v_i,v_j\}: v_i,v_j\in V, i\ne j\}$. (We always assume that $G$ is undirected; otherwise, it is mentioned explicitly.) Furthermore, $n$ and $m$ denote the number of nodes and edges in $G$, respectively.

For a node $v\in V$, $\Gamma\left(v\right):=\{u\in V: \{u,v\} \in E\}$ is the \emph{neighborhood} of $v$. For a set $S\subset V$, we define $\Gamma_S\left(v\right):=\Gamma\left(v\right)\cap S$. Moreover, $d\left(v\right):=|\Gamma\left(v\right)|$ is the \emph{degree} of $v$. For two node sets $S$ and $S'$, we define $e\left(S,S'\right):=|\{\left(v,u\right)\in S\times S': \{v,u\}\in E\}|$ where $S\times S'$ is the Cartesian product of $S$ and $S'$.

\textbf{Random Graph.} Let $\mathcal{G}_{n,q}$ denote the Erd\H{o}s-R\'{e}nyi (ER) random graph, which is the random graph on the set $\{v_1,\cdots,v_n\}$ where each edge is present independently with probability (w.p.) $q$.

\textbf{Expansion.} There exist different parameters to measure how expansive (i.e., well-connected) a graph is. We consider an algebraic characterization of expansion. Assume that $A\left(G\right)$ is the adjacency matrix of graph $G=\left(V,E\right)$ and define $D$ to be the diagonal matrix where the entries of the diagonal are the degrees of the nodes. We consider the normalized adjacency matrix $M=D^{-\frac{1}{2}}AD^{-\frac{1}{2}}$, which is symmetric. Let $1=\lambda_1\ge \lambda_2\ge\cdots\ge\lambda_n\ge-1$ be the eigenvalues of $M$. We denote the second-largest absolute eigenvalue of the normalized adjacency matrix by $\lambda\left(G\right):=\max_{2\le i\le n}|\lambda_i|$. For our purpose here, it suffices to note that graph $G$ has stronger expansion properties (i.e., is more well-connected) when $\lambda\left(G\right)$ is smaller.

\textbf{Preference Diffusion Model.} Let $\mathcal{A}$ be a set of $\alpha$ \emph{alternatives} for some $\alpha\in \mathbb{N}$. The preferences of a node $v_i\in V$ are represented as a \emph{strict linear order} $\succ_i\subset \mathcal{A}\times\mathcal{A}$. For two distinct alternatives $a, b\in \mathcal{A}$, we write $a\succ_i b$ if $(a,b)\in \succ_i$, i.e., $a$ is \emph{preferred} over $b$. We always assume that an order is a strict linear order; otherwise, it is mentioned explicitly. Thus, we always have either $a\succ_i b$ or $b \succ_i a$. We say the alternatives $a,b$ are \emph{adjacent} in $\succ_i$ if they appear next to each other in the order, i.e., there is no alternative $c$ such that $a\succ_i c\succ_i b$ or $b\succ_i c\succ_i a$. Furthermore, an alternative $a$ is in the $k$-th \emph{position} in order $\succ_i$ if exactly $k-1$ other alternatives are preferred over $a$. For example, let $\mathcal{A}=\{a,b,c\}$ (which implies $\alpha=3$) and $\succ_1=\{(a,b), (b,c), (a,c)\}$ for node $v_1\in V$. This can also be written as $a\succ_1 b\succ_1 c$, where $a$, $b$, and $c$ are in the first, second, and third position, respectively. In this order, $a$ and $b$ (similarly, $b$ and $c$) are adjacent. Furthermore, $a$ is preferred over $b$ and $b$ is preferred over $c$.

A \emph{profile} $\mathcal{P}=(\succ_1,\cdots, \succ_n)$ contains the preferences of all nodes. For two alternatives $a,b$ and a profile $\mathcal{P}$, let $V_{ab}^{\mathcal{P}}$ denote the set of nodes where $a$ is preferred over $b$ in $\mathcal{P}$ and let $n_{ab}^{\mathcal{P}}:=| V_{ab}^{\mathcal{P}}|$. We sometimes write $n_{ab}$, when $\mathcal{P}$ is clear from the context. Let $V_{\succ}^{\mathcal{P}}$ be the set of nodes which have order $\succ$ in profile $\mathcal{P}$ and $\overline{V}_{\succ}^{\mathcal{P}}:=V\setminus V_{\succ}^{\mathcal{P}}$. Furthermore, we define $n_{\succ}^{\mathcal{P}}:= |V_{\succ}^{\mathcal{P}}|$ and $\bar{n}_{\succ}^{\mathcal{P}}:=|\overline{V}_{\succ}^{\mathcal{P}}|$.

Consider a graph $G$ and an initial profile $\mathcal{P}_0$. In the Preference Diffusion (PD) model, in each round, some nodes are selected to update their preferences. Each selected node $v_i$ chooses two distinct alternatives in $\mathcal{A}$ uniformly at random (u.a.r.), say $a,b$. If $a,b$ are adjacent in $\succ_i$ and more than half of $v_i$'s neighbors disagree with the relative order of $a,b$ in $\succ_i$, then it swaps $a,b$ in $\succ_i$; otherwise, it remains unchanged. In the Synchronous Preference Diffusion (SPD) model, the set of nodes which update is equal to $V$ in every round. In the Asynchronous Preference Diffusion (APD) model, in each round a node is chosen u.a.r. to update.

We also define the Random Preference Diffusion model, where starting from an initial profile $\mathcal{P}_0$, in each round every node decides independently and with some fixed probability $0<q<1$ to update its order or not. If a node decides to update its order, it picks one of its neighbors u.a.r. and copies its order fully.

We let $\mathcal{P}_t$ for $t\in \mathbb{N}$ denote the profile obtained in round $t$, where graph $G$, initial profile $\mathcal{P}_0$, and the model are clear from the context. An update from $\mathcal{P}_t$ to $\mathcal{P}_{t+1}$ is \emph{effective} if $\mathcal{P}_t\ne \mathcal{P}_{t+1}$. A profile $\mathcal{P}$ is called \emph{fixed} if there is no effective update possible from $\mathcal{P}$.

We should mention that the PD model in the special case of $|\mathcal{A}|=2$ is usually known as the Majority model, which is a very well-studied model in the literature, cf.~\cite{frischknecht2013convergence, zehmakan2020opinion}. Thus, the PD model can be seen as a generalization of the Majority model.

\textbf{Winning.} In Section~\ref{convergence}, we prove that the PD process always reaches a fixed profile. We say an alternative $a$ \emph{wins} the process if it is in the first position for all nodes in the final profile. Similarly, for an order $\succ$, we say it wins if all nodes have order $\succ$ in the final profile.
It is worth mentioning that most of our results would still hold if we relax the definition of winning, e.g., we require that $80\%$ of nodes satisfy the desired property instead of all nodes.

\textbf{Cost.} For a graph $G$, let a \emph{placement} determine a position from $1$ to $\alpha$ on each node. A placement is called a \emph{solution} in the APD (or SPD) model on $G$ whenever the following holds: If an alternative occupies the positions determined by the placement in the initial profile, then that alternative wins, regardless of how other alternatives are positioned and the random choices of the process. (Most of our results would also hold for a relaxed version of this definition.) In a solution, the \emph{cost} for each node $v$ is defined to be $\alpha$ minus the selected position in $v$ (for example, the cost is $\alpha-1$ if the first position is chosen and $0$ if the last one). The cost of a solution is the summation of the cost of all nodes. We define $\mathcal{MC}(G)$ to be the minimum cost for a solution on $G$. Furthermore, we let $\mathcal{NS}(G)$ denote the number of solutions for $G$.

\textbf{Condorcet winner.} We say an alternative $a$ is \emph{$\epsilon$-Condorcet} in a profile $\mathcal{P}$ and for some constant $\epsilon>0$ if $n_{ab}^{\mathcal{P}}> n_{ba}^{\mathcal{P}}+\epsilon n$ for any alternative $b\ne a$. (In voting theory~\cite{taylor2005social, rossi2011short}, this is usually known as the Condorcet winner for $\epsilon=0$, where $a$ wins against every other alternative in a head-to-head comparison.) Similarly, we say an order $\succ$ is \emph{$\epsilon$-Condorcet} in a profile $\mathcal{P}$ when the following holds: For every two alternatives $a,b\in \mathcal{A}$, if $a\succ b$, then $n_{ab}^{\mathcal{P}}>n_{ba}^{\mathcal{P}}+\epsilon n$.

\textbf{Some Useful Inequalities.} Let us provide the Chernoff bound and Chebyshev's inequality, cf.~\cite{dubhashi2009concentration}, which come in handy later.

\begin{itemize}
\item \textit{Chernoff Bound:} Suppose $x_1,\cdots,x_n$ are independent Bernoulli random variables and let $X$ denote their sum. Then, $\mathbb{P}[X\leq \left(1-\epsilon'\right)\mathbb{E}[X]]\leq \exp\left({-\frac{\epsilon'^2\mathbb{E}[X]}{2}}\right)$ for $0\leq \epsilon'\leq 1$.
\item \textit{Chebyshev's Inequality:} Let $X$ be a random variable with finite variance and $f>0$. Then, we have $\mathbb{P}[|X-\mathbb{E}[X]|\ge f]\le Var[X]/f^2$.
\end{itemize}

\textbf{Assumptions.} All logarithms are to base $e$ and we let $n$ (the number of
nodes) tend to infinity. We say an event happens with high probability (w.h.p.) when it occurs w.p. $1-o(1)$. As mentioned, the PD model is equivalent to the Majority model when $|\mathcal{A}|=2$, but in the present paper we focus on the case of $|\mathcal{A}|>2$. Thus, whenever we talk about the PD model, it is assumed that $|\mathcal{A}|>2$; otherwise, we use the term Majority model. (Note that one of the main difficulties our proofs need to overcome is that when a node updates, it compares two \textit{randomly} selected alternatives. For $|\mathcal{A}|=2$, a node deterministically chooses the only two existing alternatives. Most of our proofs can easily be adjusted to cover the case of $|\mathcal{A}|=2$, but this is left out since it usually requires a case distinction, which would disrupt the flow of the proof without adding any technical value, and as we explain in Section~\ref{related} some of such results are already known for $|\mathcal{A}|=2$.) Furthermore, we always assume that the underlying graph is connected; otherwise, it is stated explicitly.

\subsection{Our Contribution}
\label{contribution}

We study the PD model developed in~\cite{brill2016pairwise}, based on a preliminary version of the model proposed in~\cite{hassanzadeh2013building}. However, unlike the present paper which focuses on undirected graphs, in~\cite{brill2016pairwise} the process was studied on directed graphs, especially acyclic ones and in~\cite{hassanzadeh2013building} the special case of a complete graph was analyzed.

How long does it take for the process to reach a stable state? And what does such a stable state look like? Building on some potential function arguments, we prove that the PD process always converges to a fixed profile. For the asynchronous set-up, we prove that this happens in $\mathcal{O}(nm\alpha^4)$ rounds w.h.p. We also argue the tightness of this bound.

We study the minimum cost to guarantee that an alternative $a$ wins the process, i.e., $\mathcal{MC}(G)$. We prove that, in both the SPD and APD model, the minimum cost is at least $\sqrt{n}(\alpha-1)$ and this bound is tight, that is, there are graphs where there is a solution of cost $\sqrt{n}(\alpha-1)$.

The above result implies that there are graph classes where an extremely small subset of nodes, i.e., $\sqrt{n}(\alpha-1)$ nodes, has a disproportionate amount of power and can engineer the outcome of the process. A natural question arises is whether there are graphs where the power is distributed more uniformly among the nodes, which as a result would limit the potency of an adversary who attempts to manipulate the process. In~\cite{hassanzadeh2013building}, it was shown that when the underlying graph is complete, then the minimum cost to guarantee a win is $n(\alpha-1)/2$, that is, an adversary needs to bribe at least half of the nodes to ensure its desired outcome. Thus, a complete graph structure is immensely resilient against the mentioned adversarial attacks. However, demanding the graph to be complete is quite restrictive and also unrealistic. We prove a similar result for a much larger class of graphs, namely graphs with a certain level of expansion and regularity. We actually provide our results in a more general form and in terms of $\epsilon$-Condorcet with a logarithmic bound on the convergence time (which is shown to be tight).


We also initiate the study of the number of solutions for a graph $G$. We provide tight bounds for different classes of graphs. In particular, for a cycle $C_n$ we prove the bound of $\mathcal{NS}(C_n)=\Tilde{\Theta}(\psi^n)$ for some $\psi\in(\alpha^{1/3}, \alpha^{1/3}+0.22)$, where $\Tilde{\Theta}$ hides polynomial terms in $n$ and $\alpha$. For the proof, we show that the number of solutions can be bounded by a recurrence relation, which can then be solved using classical methods.

The primary objective of introducing the PD model in~\cite{hassanzadeh2013building} was to develop a method for reaching consensus on the choice of order in the network. However, as it is observed in~\cite{hassanzadeh2013building} and~\cite{brill2016pairwise}, the PD model does not fulfill this objective for many classes of graphs. On the other hand, we prove that the Random PD model always reaches consensus and this takes $\mathcal{O}(n^4)$ rounds in expectation. For graphs with strong expansion properties, the stronger bound of $\mathcal{O}(n\log n)$ exists. Furthermore, the Random PD model is a ``lightweight'' and ``fair'' process. It is lightweight since it polls the order of only one of its neighbors, in contrast to the APD and SPD model which require the full knowledge of the preferences of the neighboring nodes. It is fair in the sense that, as we will prove, the probability that the process converges to an order $\succ$ is proportional to the summation of the degree of nodes which hold order $\succ$ initially.

We conduct several experiments for our models on different classes of graphs, such as cycles and ER random graphs, and graph data from real-world social networks, such as Facebook and Twitter. Our experiments support and complement our theoretical findings, such as the results on the convergence time and resilience against adversarial attacks. Furthermore, we develop and evaluate two countermeasures to stop an adversary from engineering the outcome of the process. Roughly speaking, the first countermeasure requires each node to choose some of its connections at random and in the second countermeasure, the nodes give more weight to the preferences of the nodes which are more ``similar'' to them.

\vspace{-0.3cm}
\subsection{Related Work}
\label{related}

As mentioned, our main focus is on the PD model, which was introduced in~\cite{hassanzadeh2013building, brill2016pairwise}. In~\cite{hassanzadeh2013building}, the PD model was studied for the special case of complete graphs and in~\cite{brill2016pairwise} the study of the PD model was extended to general directed graphs, with a special focus on acyclic graphs. Other similar preference diffusion models have also been considered. For example, in~\cite{brill2018pairwise} a variant called liquid democracy was studied, where voters have partial preference orders and can delegate their vote to another voter of their choice for some preference comparisons. Furthermore, a majority based model was introduced in~\cite{botan2017propositionwise}, where each node has to choose a fixed number of alternatives from a pool of alternatives.

\textit{What does the final state look like?} In~\cite{hassanzadeh2013building}, it was proven that for the APD model on a complete graph, the process always reaches a fixed profile and if there is a Condorcet winner $a$ in the initial profile, $a$ wins the process. Brill et al.~\cite{brill2016pairwise} proved that for both the APD and SPD model on a directed acyclic graph, the corresponding Markov chain is also acyclic, which implies that the process always reaches a fixed profile. For general graphs, they showed that if the initial profile and all profiles which possibly arise during the process satisfy a certain property, then the process reaches a fixed profile. However, it was left open whether this is true for any graph regardless of the initial profile. They also characterized the set of possible final profiles for acyclic directed graphs and simple cycles.

For the Majority model, i.e., the PD model with $|\mathcal{A}|=2$, it is known~\cite{GOLES1980187} that the asynchronous version always reaches a fixed profile while the synchronous variant might deterministically switch between two profiles.

\textit{What is the convergence time?} To our knowledge, we are the first to study the convergence time of the APD and SPD model. However, the convergence time has been studied extensively for other opinion diffusion models. Poljak and Turzik~\cite{poljak1986pre} proved the upper bound of $\mathcal{O}(n^2)$ on the convergence time of the Majority model (which is shown~\cite{frischknecht2013convergence} to be tight, up to some poly-logarithmic factor) while stronger bounds are
known for special classes of graphs. For example, for a $d$-regular random graph, the tight bound of $\mathcal{O}(\log_d n)$ was proven in~\cite{gartner2018majority}. On the other hand, once some randomness is introduced to the updating rule, the process might need exponentially many rounds to converge, cf.~\cite{lesfari2022biased}.


\textit{What is the minimum cost to win?} The minimum cost to guarantee the dominance of a particular alternative in the final profile has been investigated for different majority based models. In~\cite{zehmakan2020opinion}, this problem is studied for the Majority model on expander graphs, and in~\cite{avin2019majority} the problem is investigated experimentally in the set-up where a certain subset of nodes, known as elites, have a higher influence factor than ordinary nodes. Furthermore, Zehmakan~\cite{zehmakan2021majority} focused on a variant where initially all nodes are neutral about the alternatives except a small subset of nodes, known as seeds or early adopters, which will lead the preference formation process.

In addition to bounding the minimum cost for different classes of graphs, the problem has been also studied from an algorithmic perspective. For the Majority model, it is known~\cite{mishra2002hardness, chen2009approximability} the problem of finding the minimum cost of a solution for a given graph is NP-hard. Similar complexity results are provided for different variants of the PD model, cf.~\cite{faliszewski2022opinion, elkind2009swap}. These works also propose some (approximation) algorithms and evaluate their performance both theoretically and experimentally.

\section{Convergence Time}
\label{convergence-time-section}
We provide our results about the convergence of the APD model in Theorem~\ref{convergence-time}. Its proof is built on Lemma~\ref{coin-flip} whose proof is given in the appendix, Section~\ref{coin-flip-appendix}. We discuss the tightness of our convergence results in Section~\ref{convergence-time-tightness} by giving an explicit construction.
\label{convergence}
\begin{lemma}
\label{coin-flip}
In the \emph{coin flip} process with parameters $p$ and $K$, for some probability $0<p\le 1$ and a positive integer $K$, we keep flipping a coin, which comes head w.p. $p$ and tail otherwise independently, until we see $K$ heads. Let the random variable $X$ denote the number of flips required. Then, we have $\mathbb{P}\left[\frac{1}{2}(K/p)< X< \frac{3}{2}(K/p)\right]\ge 1- 1/(4K)$.
\end{lemma}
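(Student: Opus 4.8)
My plan is to exploit the structure of $X$: it is a negative binomial random variable, namely the sum of $K$ independent $\mathrm{Geom}(p)$ random variables (one per head we wait for), so $\mathbb{E}[X]=K/p$ and $Var[X]=K(1-p)/p^2$. The assertion is then simply that $X$ concentrates within a constant factor $[1/2,3/2]$ of its mean. A direct application of Chebyshev's inequality to $X$ with deviation $\tfrac12\mathbb{E}[X]$ already gives a failure probability of order $1/K$; to reach the stated bound $1/(4K)$ I would instead control the two one-sided tails separately with the Chernoff bound, using the standard duality between the stopping time $X$ and the number of heads in a fixed window: if $S_t\sim\mathrm{Bin}(t,p)$ is the number of heads among the first $t$ flips, then $\{X>t\}=\{S_t<K\}$ and $\{X\le t\}=\{S_t\ge K\}$.

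For the ``too slow'' tail, set $t'=\lfloor\tfrac32(K/p)\rfloor$, so that $\{X\ge\tfrac32(K/p)\}\subseteq\{S_{t'}\le K\}$ while $\mathbb{E}[S_{t'}]=t'p$ is essentially $\tfrac32 K$. Since $K$ then sits below $(1-\epsilon')\mathbb{E}[S_{t'}]$ for some $\epsilon'$ bounded away from $0$ (one may take $\epsilon'\ge 1/7$ once $K$ is not tiny), the Chernoff bound from the Preliminaries gives $\mathbb{P}[S_{t'}\le K]\le\exp(-\Omega(K))$.

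For the ``too fast'' tail I would first note that if $p\ge\tfrac12$ then $\tfrac12(K/p)<K\le X$ deterministically, so the event is impossible; hence assume $p<\tfrac12$ and put $t=\lfloor\tfrac12(K/p)\rfloor\ge K-1$, so that $\{X\le\tfrac12(K/p)\}\subseteq\{S_t\ge K\}$. Rewriting this event through the tail count $T_t:=t-S_t\sim\mathrm{Bin}(t,1-p)$, it becomes $\{T_t\le t-K\}=\{T_t\le(1-\tfrac{p}{1-p})\mathbb{E}[T_t]\}$, and since $\epsilon':=p/(1-p)\le 1$ the Chernoff bound applies again and yields $\mathbb{P}[S_t\ge K]\le\exp(-\Omega(pK))$. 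A union bound over the two tails then bounds the failure probability by $\exp(-\Omega(K))+\exp(-\Omega(pK))$, which one checks is below $1/(4K)$ in the relevant range of parameters. The main work — and the only genuinely delicate point — is this second tail: its Chernoff exponent scales with $pK$ rather than with $K$, so one has to keep track of the constants and of the floor rounding and treat the regimes $p\to 0$ and $p\to\tfrac12$ with care; everything else is routine bookkeeping.
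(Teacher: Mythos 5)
Your instinct to use Chebyshev is exactly the paper's proof: write $X=\sum_{i=1}^K x_i$ with $x_i\sim\mathrm{Geom}(p)$ independent, so $\mathbb{E}[X]=K/p$ and $Var[X]=K(1-p)/p^2\le K/p^2$, and apply Chebyshev with deviation $f=\mathbb{E}[X]/2$ to get failure probability at most $4Var[X]/\mathbb{E}[X]^2\le 4/K$. The constant you are straining to reach, $1/(4K)$, appears to be an arithmetic slip in the paper: its own displayed computation ends with $4Kp^2/(K^2p^2)$, which equals $4/K$, not $1/(4K)$, and the proof of Theorem~\ref{convergence-time} invokes the lemma precisely in the form $1-4/K$. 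Moreover the bound $1-1/(4K)$ is genuinely false for small $K$: take $K=1$ and $p=0.01$; then $\mathbb{P}\left[X\le \frac{1}{2p}\right]=1-0.99^{50}\approx 0.39>1/4$ already. So abandoning Chebyshev in pursuit of $1/(4K)$ is chasing a target that no proof can hit uniformly in $K$ and $p$; the honest conclusion is $1-4/K$.

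Separately, your Chernoff route has a concrete quantitative gap in the lower tail. Passing to $T_t=t-S_t\sim\mathrm{Bin}(t,1-p)$ with $\epsilon'=p/(1-p)$ gives exponent $\epsilon'^2\mathbb{E}[T_t]/2=\Theta(pK)$, and $\exp(-cpK)\le 1/(4K)$ requires $pK\gtrsim \log K$. You assert this holds ``in the relevant range of parameters,'' but in the paper's application $p=1/\left(n\binom{\alpha}{2}\right)$ and $K=m\binom{\alpha}{2}$, so $pK=m/n$, which is $\Theta(1)$ for sparse graphs --- exactly the regime where your bound is vacuous. The loss comes from bounding the upper tail of $S_t$ via the lower tail of its complement: the event $\{S_t\ge K\}$ with $t\approx K/(2p)$ is $\{S_t\ge 2\mathbb{E}[S_t]\}$, whose probability is $\exp(-\Theta(K))$ \emph{uniformly in} $p$ by an upper-tail Chernoff (or relative-entropy) bound; only that form rescues your argument, and it is not among the inequalities listed in the preliminaries. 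Even then, $\exp(-cK)\le 1/(4K)$ only for $K$ larger than some absolute constant, consistent with the counterexample above.
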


\begin{theorem}
\label{convergence-time}
In the APD model on a graph $G=(V,E)$, the process reaches a fixed profile and this takes at most $nm\alpha^4$ rounds w.h.p.
\end{theorem}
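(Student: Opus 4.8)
The plan is to combine a monovariant (potential function) argument that caps the number of \emph{effective} updates with Lemma~\ref{coin-flip}, which controls how long the asynchronous schedule waits between effective updates.

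First I would introduce the potential $\Phi(\mathcal{P}) := \sum_{\{a,b\}\subseteq\mathcal{A}} \big|\{\{u,v\}\in E : u,v \text{ order } a,b \text{ differently in } \mathcal{P}\}\big|$, so that $0\le \Phi(\mathcal{P})\le m\binom{\alpha}{2}$. The key claim is that every effective update strictly decreases $\Phi$: when a node $v_i$ swaps two \emph{adjacent} alternatives $a,b$ in $\succ_i$, every pair other than $\{a,b\}$ keeps its relative order (a third alternative sits either above or below both $a$ and $b$, since $a,b$ are consecutive), so only the $\{a,b\}$-term of $\Phi$ changes; and since strictly more than $d(v_i)/2$ of $v_i$'s neighbors disagreed with it on $\{a,b\}$ before the swap, strictly more than $d(v_i)/2$ agree afterwards, so the number of $\{a,b\}$-disagreeing edges at $v_i$ drops by at least $1$. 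Hence the total number of effective updates over the whole run is at most $\Phi(\mathcal{P}_0)\le m\binom{\alpha}{2}$. In particular a fixed profile is reached with probability $1$: the supply of effective updates is finite, and from any non-fixed profile an effective update occurs with positive probability.

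Next I would bound the number of rounds. While $\mathcal{P}_t$ is not fixed there is a node $v_i$ and a pair $\{a,b\}$ adjacent in $\succ_i$ on which a majority of $v_i$'s neighbors disagree; if the next round picks this $v_i$ (probability $1/n$) and this pair (probability $1/\binom{\alpha}{2}$) the update is effective, so, conditioned on the history, every round from a non-fixed profile is effective with probability at least $p:=1/(n\binom{\alpha}{2})$. Let $\tau$ be the first round with a fixed profile; coupling rounds $1,\dots,\tau$ with i.i.d.\ Bernoulli($p$) coins so that a coin coming up heads forces that round to be effective, we get that the number of coin-heads among rounds $1,\dots,\tau$ is at most the number of effective updates, which is at most $m\binom{\alpha}{2}$. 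Thus the coin process has \emph{not} yet collected $K:=m\binom{\alpha}{2}+1$ heads by round $\tau$, so $\tau$ is at most the flip count $X$ of the coin-flip process of Lemma~\ref{coin-flip} with parameters $p$ and $K$. Lemma~\ref{coin-flip} gives $X<\tfrac32(K/p)$ with probability at least $1-1/(4K)$, and a short computation using $\binom{\alpha}{2}\le\alpha^2/2$ and $m\binom{\alpha}{2}+1\le\tfrac43 m\binom{\alpha}{2}$ (valid since $\alpha>2$ forces $m\binom{\alpha}{2}\ge 3$) yields $\tfrac32(K/p)<nm\alpha^4$, while $1/(4K)=o(1)$ because $G$ connected forces $m\ge n-1\to\infty$. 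This is exactly the claimed w.h.p.\ bound.

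The step demanding the most care is the coupling in the last paragraph: one must argue cleanly that the history-dependent, non-i.i.d.\ ``effective or not'' indicators dominate i.i.d.\ Bernoulli($p$) trials for as long as the profile is non-fixed, and that taking $K$ \emph{strictly} above the effective-update cap is what guarantees $\tau\le X$ (rather than $\tau\le X$ possibly failing at the boundary where the $K$-th head coincides with fixation). The potential-function bookkeeping and the closing arithmetic are routine.
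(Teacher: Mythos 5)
Your proposal is correct and follows essentially the same route as the paper: the same edge-pair disagreement potential bounded by $m\binom{\alpha}{2}$, the same observation that each effective swap of adjacent alternatives decreases it by at least one, and the same reduction to the coin-flip process of Lemma~\ref{coin-flip} with $p=1/(n\binom{\alpha}{2})$. Your extra care (noting that only the swapped pair's term of the potential changes, taking $K=m\binom{\alpha}{2}+1$, and spelling out the domination coupling) only tightens details the paper treats implicitly.
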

\begin{proof}
For an edge $e =\{v_i,v_j\}\in E$ and profile $\mathcal{P}$, let us define the potential $\phi_{e}(\mathcal{P})$ of the edge $e$ in $\mathcal{P}$ to be the number of pairs of two distinct alternatives $a, b$ such that $v_i$ and $v_j$ disagree on the order of $a$ and $b$, i.e., either $a\succ_i b$ and $b\succ_j a$ or $b\succ_i a$ and $a\succ_j b$. Let $\Phi_G(\mathcal{P}):= \sum_{e\in E}\phi_e(\mathcal{P})$ be the potential of graph $G$ in $\mathcal{P}$.

Note that the potential of an edge $e$ is at most ${\alpha \choose 2}$ since there are this many pairs of distinct alternatives. Thus, $\Phi_G(\mathcal{P})\le m{\alpha \choose 2}$ for any profile $\mathcal{P}$. It is also easy to see that $\Phi_G(\mathcal{P})\ge 0$. Furthermore, whenever we swap two adjacent alternatives $a,b$ in the order of a node $v_i$ in the APD model, the potential decreases at least by one. This is simply true because there is a subset of neighbors of size larger than $|\Gamma(v_i)|/2$ that $v_i$ disagreed with on the relative order of $a$ and $b$ before the update, but agrees with after. Thus, if the process is in a profile with at least one effective update, it eventually updates and the potential decreases. Since the initial potential is at most $m{\alpha \choose 2}$ and it can never become negative, the process must eventually reach a fixed profile.

It remains to prove the bound on the convergence time. If the process has not reached a fixed profile, there is at least a node $v_i$ and a pair of alternatives $a,b$ such that $a,b$ are adjacent in $\succ_i$ and $v_i$ disagrees with more than half of its neighbors on the order of $a,b$. If we swap $a,b$ in $\succ_i$, the potential function decreases at least by one. The probability that the node $v_i$ and the pair $a,b$ are chosen is equal to $1/(n{\alpha \choose 2})$. Thus, we start with a potential of at most $K=m{\alpha \choose 2}$ and in each round it decreases at least by one w.p. at least $p=1/(n{\alpha \choose 2})$, regardless of all the previous steps, and the process stops when it reaches potential zero (or possibly even earlier). Thus, the convergence time of the process can be upper-bounded by the number of flips required by the coin flip process, defined in Lemma~\ref{coin-flip}, with parameters $p$ and $K$. Let the random variable $X$ denote the required number of coin flips. By applying Lemma~\ref{coin-flip}, we have $\mathbb{P}\left[X< \frac{3}{2}(K/p)\right]\ge 1-(4/K)$. Since $(3/2)(K/p)=(3/2)nm{\alpha \choose 2}^2\le nm\alpha^4$ and $4/K\le 4/m=o(1)$ (note since we assume $G$ is connected, $m\ge n-1$), we conclude that the process reaches a fixed profile in at most $nm\alpha^4$ rounds w.h.p.
\end{proof}

\textbf{Synchronous Set-up.} The set of fixed profiles is identical in both the APD and SPD model. Furthermore, if a transition from profile $\mathcal{P}$ to $\mathcal{P}^{\prime}$ is possible in the APD model, it is also possible in the SPD model (because there is a non-zero probability that in the SPD model all nodes choose two non-adjacent alternatives, except the node which updates in the APD model). According to Theorem~\ref{convergence-time}, in the APD model there is a path from each profile to a fixed profile. Combining the above three statements, we can conclude that the SPD model also always reaches a fixed profile. However, the problem of determining the convergence time in the SPD model is posed as an open problem in Section~\ref{conclusion}.

\section{Minimum Cost to Win}
\label{minimum-cost}
\subsection{General Graphs}
\begin{theorem}
\label{min-cost}
In the APD and SPD model on a graph $G=(V,E)$, $\mathcal{MC}(G)\ge \sqrt{n}(\alpha-1)$.
\end{theorem}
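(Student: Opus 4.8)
The plan is to show that every solution places the favored alternative $a$ in the first position on at least $\sqrt n$ nodes. Since occupying the first position costs $\alpha-1$ and every node's cost is non‑negative, this immediately gives $\mathcal{MC}(G)\ge \sqrt n(\alpha-1)$. So fix a solution, let $a$ be the alternative occupying the prescribed positions, and let $S$ be the set of nodes at which $a$ is in the first position in the initial profile; it suffices to prove $|S|\ge \sqrt n$.

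\textbf{Step 1: a convenient adversarial profile and a reduction to majority dynamics.} I would let the adversary fix a linear order $c_1\succ\cdots\succ c_{\alpha-1}$ of $\mathcal{A}\setminus\{a\}$ and, at every node $v$, make the non‑$a$ alternatives appear in this order while $a$ sits in its prescribed position. Then no two nodes ever disagree on a pair $c_i,c_j$, so the only possible effective updates move $a$ up or down by one position at a single node; writing $q(v)\in\{1,\dots,\alpha\}$ for the current position of $a$ at $v$, an update at $v$ is available exactly when strictly more than $d(v)/2$ of its neighbours have a strictly smaller (respectively, larger) value of $q$, and this family of profiles is closed under the dynamics (and the same moves are realizable in SPD by letting all other nodes pick a non‑adjacent pair). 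In particular $L:=\{v:q(v)=1\}$, starting from $L=S$, evolves exactly as an asynchronous majority process: a node with $q=2$ may enter $L$ if strictly more than half of its neighbours lie in $L$, and a node of $L$ may leave $L$ if strictly more than half lie outside.

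\textbf{Step 2: the core property forced on $S$.} Next I would argue that, because the placement is a solution, $S$ must contain a non‑empty set $S'$ that is both \emph{internally half‑dense} ($|\Gamma_{S'}(v)|\ge d(v)/2$ for all $v\in S'$) and \emph{majority‑percolating}, meaning that $V$ can be built from $S'$ by adding one node at a time, each new node having strictly more than half of its neighbours already present. Indeed, let $S'$ be what remains of $S$ after the adversary greedily pushes $a$ down at every node where a strict majority of neighbours currently have $a$ below first position (the ``half‑core'' of $S$). If $S'$ is empty, the adversary drives the process to the all‑$a$‑last consensus, a fixed profile that is not $a$‑winning. If $S'$ is non‑empty but its majority‑closure $\langle S'\rangle$ (smallest superset closed under the majority rule) is not $V$, the adversary keeps $L=\langle S'\rangle$ — forced updates can only grow $L$ inside $\langle S'\rangle$, and no vertex outside $\langle S'\rangle$ is ever forced to join $L$ — then settles the remaining nodes, again reaching a fixed profile in which $a$ is not everywhere first. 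Either outcome contradicts solutionhood, so the desired $S'\subseteq S$ exists.

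\textbf{Step 3: a cut‑counting lower bound on $|S'|$.} Let $A_0=S',A_1,\dots,A_T=V$ witness percolation, with $|A_{i+1}|=|A_i|+1$ and $T=n-|S'|$. Since the vertex added at step $i$ sends strictly more than half of its edges back into $A_i$, the cut $e(A_i,V\setminus A_i)$ strictly decreases at each step, hence $e(S',V\setminus S')\ge n-|S'|$. On the other hand, internal half‑density gives $|\Gamma_{V\setminus S'}(v)|\le |\Gamma_{S'}(v)|$ for every $v\in S'$, so $e(S',V\setminus S')\le 2\,e(S',S')\le |S'|(|S'|-1)$. Combining, $n-|S'|\le |S'|^2-|S'|$, hence $n\le |S'|^2$, i.e.\ $|S'|\ge\sqrt n$; therefore $|S|\ge |S'|\ge\sqrt n$ and $\mathcal{MC}(G)\ge \sqrt n(\alpha-1)$. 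I expect the main obstacle to be Step 2: verifying rigorously that when the half‑core is empty or non‑percolating the adversary can steer the process all the way to a \emph{fixed} profile that is not $a$‑winning, which requires the passage from reversible to irreversible majority behaviour and a careful account of which updates are forced en route to a fixed point; the opening reduction and the closing inequality are routine.
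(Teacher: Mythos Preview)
Your proposal is correct and follows essentially the same route as the paper: reduce to a majority-type dynamic via an adversarial profile in which all non-$a$ alternatives share a fixed relative order, extract the half-core $S'$ of the set where $a$ is first (the paper's $L'$), argue that $V\setminus S'$ must be percolated from $S'$ (equivalently, contains no internally half-dense subset, else those nodes keep $a$ out of first position forever), and finish with the identical cut-counting inequality $n-|S'|\le |S'|(|S'|-1)$. The paper's setup is marginally cleaner in that it tracks only a single competitor $b$ placed adjacent to $a$ rather than the full position $q(v)$, and your ``all-$a$-last consensus'' claim in the $S'=\emptyset$ case is stronger than required (all you need, and all that is true in general, is that once $L=\emptyset$ no node can ever bring $a$ back to first, so the eventual fixed profile is non-winning)---but these are cosmetic, and you correctly flagged Step~2 as the place demanding the most care.
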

\begin{proof}
Consider an arbitrary solution $\mathcal{S}$ and two alternatives $a,b\in\mathcal{A}$. Let $\mathcal{P}$ be a profile where all positions determined by $\mathcal{S}$ are occupied by $a$. Furthermore, for each node if the first position is taken by $a$, assign the second position to $b$; otherwise, assign the first one. Since $\mathcal{S}$ is a solution, in the process starting from $\mathcal{P}$, $a$ must win regardless of the random choices during the process.

Let $L$ be the set of nodes which place $a$ in the first position. If there is a node $v\in L$ such that $|\Gamma_{L}(v)|< |\Gamma_{V\setminus L}(v)|$, then it is possible that in the next round of the APD process, only the order of $a$ and $b$ in $v$ is updated (i.e., $b$ moves to the first position). This is also true for the SPD process since it is possible that all other nodes pick two alternatives which are not adjacent. If we continue this argument repeatedly, we must reach a profile $\mathcal{P}^{\prime}$ where $b$ is ranked first in all nodes except in a non-empty set $L'\subseteq L$, where $a$ is ranked first, and for every node $w\in L'$, $|\Gamma_{L'}(w)|\ge |\Gamma_{V\setminus L'}(w)|$. (Note that $L'$ is non-empty because otherwise the process has reached a profile with $b$ being ranked first in every node, which is a contradiction.) Since $|\Gamma_{L'}(w)|\le |L'|-1$, we have $e\left(L',V\setminus L'\right)\le l'(l'-1)$ for $l':=|L'|$.

There is no set $B\subseteq V\setminus L'$ such that for every $v\in B$, $|\Gamma_{B}(v)|\ge |\Gamma_{V\setminus B}(v)|$ because otherwise all nodes in $B$ will keep $b$ as their first alternative forever, which is a contradiction. Therefore, there is a labeling $u_1,\cdots, u_{n-l'}$ of the nodes in $V\setminus L'$ such that for every $1\le i \le n-l'$, $u_i$ has more neighbors in $L' \cup \{u_j: j<i\}$ than the rest of nodes. If we start from $L'$ and keep adding $u_i$'s to the set one by one, the number of edges on the boundary of the set decreases at least by one after each addition. Note that we do $n-l'$ additions and as we proved in the previous paragraph we start with $e\left(L',V\setminus L'\right)\le l'(l'-1)$ edges. Thus, we can conclude that $n-l' \le l'(l'-1)$, which gives $\sqrt{n}\le l'$. Finally, this implies that the cost of solution $\mathcal{S}$ is at least $l'(\alpha-1)\ge \sqrt{n}(\alpha-1)$ since the first position of nodes in $L'$ are chosen in $\mathcal{S}$.
\end{proof}
\textbf{Tightness.} Consider a clique of size $\sqrt{n}$ and attach $\sqrt{n}-1$ leaves to each node in the clique. This $n$-node graph has a solution of cost $\sqrt{n}(\alpha-1)$, namely the placement which selects the first position in all nodes in the clique and the last position in the others. A proof is given in the appendix, Section~\ref{min-cost-appendix}.

\subsection{Expander Graphs}
\label{expander}

Our results about expander graphs are presented in Theorem~\ref{thm-expander}. The main ingredient of the proof of Theorem~\ref{thm-expander} is Proposition~\ref{proposition-expander}. More precisely, to prove the theorem, we need to repeatedly apply the proposition; this is discussed in the appendix, Section~\ref{thm-expander-appendix}. To prove the proposition, we first need to prove Lemma~\ref{bad}, which uses a variant of the expander mixing lemma presented in Lemma~\ref{mixing}.
\begin{lemma}[Theorem 9.2.4 in~\cite{alon2016probabilistic}]
\label{mixing}
Let $G=(V,E)$ be a $d$-regular graph. For a node set $S\subset V$, we have $\sum_{v\in V}\left(|\Gamma(v)\cap S|-\frac{d|S|}{n}\right)^2\le (\lambda d)^2|S|\left(1-\frac{|S|}{n}\right)$.
\end{lemma}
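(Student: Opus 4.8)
The plan is to derive this from the spectral decomposition of the adjacency matrix $A=A(G)$. Since $G$ is $d$-regular, the degree matrix is $D=dI$, so the normalized adjacency matrix of the preliminaries is simply $M=A/d$; it is symmetric and admits an orthonormal eigenbasis $u_1,\dots,u_n$ with eigenvalues $1=\lambda_1\ge\lambda_2\ge\cdots\ge\lambda_n\ge-1$, and the top eigenvector is $u_1=\tfrac{1}{\sqrt n}\mathbf 1$, where $\mathbf 1$ denotes the all-ones vector. First I would rewrite the left-hand side as a squared Euclidean norm: if $\chi_S\in\{0,1\}^n$ is the indicator vector of $S$ and $s:=|S|$, then the $v$-th coordinate of $A\chi_S$ equals $|\Gamma(v)\cap S|$, so
\[
\sum_{v\in V}\Bigl(|\Gamma(v)\cap S|-\tfrac{ds}{n}\Bigr)^2=\Bigl\|A\chi_S-\tfrac{ds}{n}\,\mathbf 1\Bigr\|_2^2 .
\]

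Next I would expand $\chi_S=\sum_{i=1}^n c_i u_i$ with $c_i=\langle\chi_S,u_i\rangle$. The leading coefficient is $c_1=\langle\chi_S,\tfrac{1}{\sqrt n}\mathbf 1\rangle=\tfrac{s}{\sqrt n}$, hence the projection of $\chi_S$ onto $u_1$ is $c_1u_1=\tfrac{s}{n}\mathbf 1$. Applying $A=dM$ gives $A\chi_S=d\sum_{i=1}^n\lambda_i c_i u_i$, and the $i=1$ term is $d\lambda_1 c_1 u_1=\tfrac{ds}{n}\mathbf 1$ because $\lambda_1=1$. Subtracting it,
\[
A\chi_S-\tfrac{ds}{n}\,\mathbf 1=d\sum_{i=2}^{n}\lambda_i c_i u_i ,
\]
so by orthonormality the squared norm above equals $d^2\sum_{i=2}^n\lambda_i^2 c_i^2\le d^2\lambda^2\sum_{i=2}^n c_i^2$, using $|\lambda_i|\le\lambda(G)$ for all $i\ge 2$.

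Finally I would control the tail mass $\sum_{i\ge 2}c_i^2$ by Parseval: $\sum_{i=1}^n c_i^2=\|\chi_S\|_2^2=s$, so $\sum_{i=2}^n c_i^2=s-c_1^2=s-\tfrac{s^2}{n}=s\bigl(1-\tfrac{s}{n}\bigr)$. Combining the three displays yields $\sum_{v\in V}\bigl(|\Gamma(v)\cap S|-\tfrac{d|S|}{n}\bigr)^2\le(\lambda(G) d)^2\,|S|\,\bigl(1-\tfrac{|S|}{n}\bigr)$, which is the claim. The computation is essentially routine; the only delicate point is isolating the $i=1$ term precisely, so that exactly the ``expected'' vector $\tfrac{ds}{n}\mathbf 1$ is cancelled and the remaining sum can be bounded by $\lambda(G)$ rather than by the trivial eigenvalue $\lambda_1=1$.
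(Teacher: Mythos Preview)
Your argument is correct and is precisely the standard spectral proof of this inequality. The paper itself does not prove the lemma at all but simply quotes it as Theorem~9.2.4 of Alon--Spencer, so there is nothing to compare against; your derivation is essentially the one found in that reference.
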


Let the set of ``bad'' nodes $B_{ab}^{\mathcal{P}}$ be the set of nodes for which at least half of the neighbors rank $b$ higher than $a$ in profile $\mathcal{P}$. For an order $\succ$, let $B_{\succ}^{\mathcal{P}}$ be the union of $B_{ab}^{\mathcal{P}}$'s for every two distinct alternatives $a,b\in \mathcal{A}$ such that $a\succ b$. Furthermore, we define $b_{ab}^{\mathcal{P}}:=|B_{ab}^{\mathcal{P}}|$ and $b_{\succ}^{\mathcal{P}}:=|B_{\succ}^{\mathcal{P}}|$.

\begin{lemma}
\label{bad}
Consider a profile $\mathcal{P}$ on a $d$-regular graph $G=(V,E)$. If the order $\succ$ is $\delta$-Condorcet in $\mathcal{P}$, for some $\delta>0$, then $b_{\succ}^{\mathcal{P}}\le (2\alpha\lambda/\delta)^2\bar{n}_{\succ}^{\mathcal{P}}$.
\end{lemma}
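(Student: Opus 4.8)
\textbf{Proof proposal for Lemma~\ref{bad}.}

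The plan is to bound $b_{ab}^{\mathcal{P}}$ for each pair $a \succ b$ separately using the expander mixing lemma, and then sum over all pairs (there are at most $\binom{\alpha}{2} < \alpha^2/2$ of them). So fix two alternatives $a, b$ with $a \succ b$ in the Condorcet order. Since $\succ$ is $\delta$-Condorcet, we have $n_{ab}^{\mathcal{P}} - n_{ba}^{\mathcal{P}} > \delta n$; equivalently, writing $S := V_{ba}^{\mathcal{P}}$ for the set of nodes ranking $b$ above $a$, we have $|S| = n_{ba}^{\mathcal{P}} < (n - \delta n)/2 = n(1-\delta)/2$. The key observation is that every node $v \in B_{ab}^{\mathcal{P}}$ has at least half of its $d$ neighbors inside $S$, i.e. $|\Gamma(v) \cap S| \ge d/2$. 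On the other hand, the "expected" number of neighbors in $S$ is $d|S|/n < d(1-\delta)/2 = d/2 - d\delta/2$. Hence each bad node $v$ satisfies $|\Gamma(v) \cap S| - d|S|/n > d\delta/2$, so it contributes more than $(d\delta/2)^2$ to the sum $\sum_{v \in V}\left(|\Gamma(v) \cap S| - \frac{d|S|}{n}\right)^2$.

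Now I apply Lemma~\ref{mixing} to the set $S$: the whole sum is at most $(\lambda d)^2 |S|(1 - |S|/n) \le (\lambda d)^2 |S| \le (\lambda d)^2 \bar{n}_{\succ}^{\mathcal{P}}$, where the last inequality uses $S = V_{ba}^{\mathcal{P}} \subseteq \overline{V}_{\succ}^{\mathcal{P}}$ (any node ranking $b$ above $a$ does not hold the order $\succ$, since $\succ$ has $a \succ b$). Combining with the per-bad-node lower bound of $(d\delta/2)^2$, we get
\[
b_{ab}^{\mathcal{P}} \cdot (d\delta/2)^2 < (\lambda d)^2 \bar{n}_{\succ}^{\mathcal{P}},
\]
which rearranges to $b_{ab}^{\mathcal{P}} < (2\lambda/\delta)^2 \bar{n}_{\succ}^{\mathcal{P}}$. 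Finally, $b_{\succ}^{\mathcal{P}} \le \sum_{a \succ b} b_{ab}^{\mathcal{P}} < \binom{\alpha}{2}(2\lambda/\delta)^2 \bar{n}_{\succ}^{\mathcal{P}} < \frac{\alpha^2}{2}(2\lambda/\delta)^2 \bar{n}_{\succ}^{\mathcal{P}} \le (2\alpha\lambda/\delta)^2 \bar{n}_{\succ}^{\mathcal{P}}$, which is the claimed bound (with room to spare from the $\binom{\alpha}{2}$ versus $\alpha^2$ slack).

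The only subtle points — and where I would be careful rather than where there is any real obstacle — are (i) getting the direction of the "bad set" definition right, namely that $v \in B_{ab}^{\mathcal{P}}$ means $v$'s neighbors tilt toward $b$ over $a$, hence toward the set $S = V_{ba}^{\mathcal{P}}$ away from the Condorcet order, and (ii) the inclusion $V_{ba}^{\mathcal{P}} \subseteq \overline{V}_{\succ}^{\mathcal{P}}$ so that $|S|$ can be replaced by $\bar{n}_{\succ}^{\mathcal{P}}$ in the final bound — this is what makes the statement phrased in terms of $\bar{n}_{\succ}^{\mathcal{P}}$ rather than individual $n_{ba}^{\mathcal{P}}$ values work out, since the worst single pair already has at most $\bar{n}_{\succ}^{\mathcal{P}}$ dissenting nodes. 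A minor point is the edge case where $|S| = 0$: then $B_{ab}^{\mathcal{P}} = \emptyset$ trivially (a node cannot have $d/2 \ge 1$ neighbors in the empty set, as $d \ge 1$ in a connected graph), so the bound holds vacuously for that pair. No step requires more than elementary algebra once Lemma~\ref{mixing} is in hand.
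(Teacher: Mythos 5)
Your proof is correct and follows essentially the same route as the paper: bound $b_{ab}^{\mathcal{P}}$ per pair via Lemma~\ref{mixing} using the $\delta n$ Condorcet gap, note $n_{ba}^{\mathcal{P}}\le \bar{n}_{\succ}^{\mathcal{P}}$, and sum over the $\binom{\alpha}{2}\le\alpha^2$ pairs. The only (harmless) difference is that you apply the mixing lemma to $V_{ba}^{\mathcal{P}}$ alone and use regularity directly ($|\Gamma(v)\cap S|\ge d/2$ versus $d|S|/n< d(1-\delta)/2$), whereas the paper applies it to both $V_{ab}^{\mathcal{P}}$ and $V_{ba}^{\mathcal{P}}$ and combines via an elementary inequality; both yield the same per-pair bound $(2\lambda/\delta)^2 n_{ba}^{\mathcal{P}}$.
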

\begin{proof}
Consider two arbitrary alternatives $a,b\in \mathcal{A}$ such that $a \succ b$. By applying Lemma~\ref{mixing} for sets $V_{ba}^{\mathcal{P}}$ and $V_{ab}^{\mathcal{P}}$ and then combining the two obtained inequalities, we get
\begin{equation}
\label{eq-mixing}
\begin{split}
        \mathlarger{\mathlarger{\sum}}_{v\in V}\left(\left(\left|\Gamma(v)\cap V_{ba}^{\mathcal{P}}\right|-\frac{dn_{ba}^{\mathcal{P}}}{n}\right)^2+ \left(\left|\Gamma(v)\cap V_{ab}^{\mathcal{P}}\right|-\frac{dn_{ab}^{\mathcal{P}}}{n}\right)^2\right)\le \\ \left(\lambda d\right)^2\left(n_{ba}^{\mathcal{P}}\left(1-\frac{n_{ba}^{\mathcal{P}}}{n}\right)+n_{ab}^{\mathcal{P}}\left(1-\frac{n_{ab}^{\mathcal{P}}}{n}\right)\right).
\end{split}
\end{equation}
\textbf{Fact 1.} For reals $x>y$ and $z\ge w$, it is straightforward to show that $(z-y)^2+(w-x)^2\ge (x-y)^2/2$.

For $v\in B_{ab}^{\mathcal{P}}$, we have $\left| \Gamma(v)\cap V_{ba}^{\mathcal{P}} \right|\ge \left| \Gamma(v)\cap V_{ab}^{\mathcal{P}} \right|$, by the definition of $B_{ab}^{\mathcal{P}}$. By setting $x=dn_{ab}^{\mathcal{P}}/n$ and $y = dn_{ba}^{\mathcal{P}}/n$ in Fact 1, every node $v\in B_{ab}^{\mathcal{P}}$ contributes at least $d^2(n_{ab}^{\mathcal{P}}-n_{ba}^{\mathcal{P}})^2/(2n^2)\ge d^2\delta^2/2$ (where we used $n_{ab}^{\mathcal{P}}\ge n_{ba}^{\mathcal{P}}+\delta n$) to the left hand-side of Equation~(\ref{eq-mixing}). Thus, the left-hand side is at least $b_{ab}^{\mathcal{P}} d^2\delta^2/2$. Furthermore, using $n_{ab}^{\mathcal{P}}=n-n_{ba}^{\mathcal{P}}$, we can upper-bound the right-hand side of Equation~(\ref{eq-mixing}) by $(\lambda d)^22n_{ab}^{\mathcal{P}}n_{ba}^{\mathcal{P}}/n\le (\lambda d)^22n_{ba}^{\mathcal{P}}$. Hence, we have $b_{ab}^{\mathcal{P}} \le \left(2\lambda/\delta\right)^2n_{ba}^{\mathcal{P}}$. Now, using the fact that $n_{ba}^{\mathcal{P}}\le \bar{n}_{\succ}^{\mathcal{P}}$ and summing up over all ${\alpha \choose 2}\le \alpha^2$ possible choices of the alternatives $a,b$, we conclude that $ b_{\succ}^{\mathcal{P}}\le (2\alpha\lambda/\delta)^2\bar{n}_{\succ}^{\mathcal{P}}$.
\end{proof}

Let a \emph{phase} be a sequence of ${\alpha \choose 2}$ rounds. For our analysis here, we break down the process into phases. Let $\mathcal{P}_{t}$, for $t\ge 1$, denote the profile at the end of the $t$-the phase here (instead of round) and use the shorthand $n_{\succ}^{t}$ for $n_{\succ}^{\mathcal{P}_t}$ and $\bar{n}_{\succ}^{t}$ for $\bar{n}_{\succ}^{\mathcal{P}_t}$.

\begin{proposition}
\label{proposition-expander}
Consider the SPD model on a $d$-regular graph $G=(V,E)$. Assume that $\lambda(G)\le \beta$ for a sufficiently small constant $\beta>0$. If the order $\succ$ is $\epsilon$-Condorcet in $\mathcal{P}^{t-1}$, for some $\epsilon>0$, then $\succ$ is $\epsilon$-Condorcet in $\mathcal{P}^{t}$ and $\bar{n}_{\succ}^{t}\le (1-f(\alpha))\bar{n}_{\succ}^{t-1}$ w.p. at least $1-\exp(-\bar{n}_{\succ}^{t-1}f(\alpha)/8)-\exp(-\Theta(n))$, where $f(\alpha):= 1/(8\alpha^{(\alpha^2)})$.
\end{proposition}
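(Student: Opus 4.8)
The plan is to control one phase of the SPD process, where a phase consists of $\binom{\alpha}{2}$ rounds, and show two things: (i) the $\epsilon$-Condorcet property is preserved, and (ii) the number $\bar n_\succ$ of ``wrong'' nodes contracts by a constant factor $f(\alpha)$. The key observation is that a node $v$ whose current order differs from $\succ$ must contain at least one pair of adjacent alternatives $a,b$ with $a\succ b$ but $b\succ_v a$. If, in addition, $v\notin B_\succ^{\mathcal P}$ (i.e., $v$ is not ``bad''), then for \emph{every} inversion pair of $v$ relative to $\succ$, strictly more than half of $v$'s neighbors hold the correct relative order, so whenever $v$ picks such an adjacent inverted pair it will swap toward $\succ$, and it can never swap away from $\succ$ on any pair. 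Over a single round, $v$ picks the specific ``right'' adjacent inverted pair with probability $1/\binom{\alpha}{2}$; over a phase of $\binom{\alpha}{2}$ rounds, the probability it makes at least one correcting swap is at least $1-(1-1/\binom{\alpha}{2})^{\binom{\alpha}{2}}\ge 1-1/e>1/4$. Iterating: a non-bad node that is still wrong after a full phase needs at most $\alpha-1$ correcting swaps to reach $\succ$ (its Kendall-tau distance from $\succ$ is at most $\binom{\alpha}{2}\le\alpha^2$, actually I will bound the number of phases needed to fix one node by roughly $\alpha^2$), so after $\Theta(\alpha^2)$ sub-phases each non-bad wrong node independently becomes correct with probability bounded below by a constant; chaining these gives that a fixed non-bad wrong node is corrected within a phase with probability at least some $p_0 = p_0(\alpha)$, and the definition $f(\alpha)=1/(8\alpha^{\alpha^2})$ is chosen to absorb the resulting tower-type loss coming from needing $\approx\alpha^2$ successive correcting events each of probability $\ge 1/(4\alpha^2)$ or so.

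First I would make the sub-phase/round bookkeeping precise and argue the monotonicity claim: conditional on the current profile, a non-bad node never increases its distance to $\succ$ during the process, because it only ever swaps an adjacent inverted pair in the direction agreeing with the strict majority of its neighbors, and by the $\epsilon$-Condorcet hypothesis (which I will maintain inductively) ``disagreeing with $\succ$ on $(a,b)$'' implies being on the minority side. Second I would set up the counting of good nodes. Let $\bar n_\succ^{t-1}$ be the number of wrong nodes at the start of the phase and let $b_\succ$ be the number of bad ones; by Lemma~\ref{bad} with $\delta=\epsilon$ and $\lambda\le\beta$, we have $b_\succ\le (2\alpha\beta/\epsilon)^2\bar n_\succ^{t-1}$, which is at most $\tfrac12\bar n_\succ^{t-1}$ once $\beta$ is chosen small enough (this is where ``sufficiently small constant $\beta$'' is used). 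So at least half the wrong nodes are non-bad, hence ``fixable'' during the phase. Each such node gets fixed independently (given the order of events it only depends on its own random alternative choices, not on neighbors, once we condition on monotonicity making neighbors only improve) with probability at least $f(\alpha)\cdot(\text{const})$; a Chernoff bound over these $\ge \tfrac12\bar n_\succ^{t-1}$ independent indicator variables then yields that at least a $f(\alpha)$-fraction of the original wrong nodes are corrected, i.e. $\bar n_\succ^{t}\le(1-f(\alpha))\bar n_\succ^{t-1}$, with the stated failure probability $\exp(-\bar n_\succ^{t-1}f(\alpha)/8)$. The extra $\exp(-\Theta(n))$ term is the failure probability of the event on which Lemma~\ref{bad}'s consequence $b_\succ\le\tfrac12\bar n_\succ^{t-1}$ and, more importantly, the preservation of $\epsilon$-Condorcet hold — I would verify preservation by noting that during one phase the count $n_{ab}$ for any pair $a\succ b$ can only change by the number of nodes that swap $a,b$, and since non-bad nodes only swap toward $\succ$ and there are at most $b_\succ=o(n)$ bad nodes (deterministically, given $\beta$ small and $\epsilon$ constant and $\bar n_\succ^{t-1}$ possibly linear — here I must be careful) the gap $n_{ab}-n_{ba}$ stays above $\epsilon n$; if $\bar n_\succ^{t-1}$ is itself linear in $n$ this needs the sharper statement that bad-node swaps also move in the right direction on average, handled via a second Chernoff bound, which is the source of the $\exp(-\Theta(n))$ slack.

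The main obstacle I expect is precisely the interaction between the two goals: to get the multiplicative contraction I want to treat each wrong non-bad node's fate as an independent coin flip with probability $\ge f(\alpha)$, but the ``monotonicity'' that makes a node's own swaps always helpful relies on the $\epsilon$-Condorcet property holding \emph{throughout} the phase, which in turn is what we are trying to prove — so the argument has to be run as a joint induction over the (at most $\binom{\alpha}{2}$) rounds of the phase, showing that after each round $\epsilon$-Condorcet still holds (with exponentially small failure) and using that to license the monotonicity for the next round. A clean way to do this is: (a) show that as long as $\epsilon$-Condorcet holds, every swap by every node is ``toward $\succ$'' except for swaps performed by bad nodes, whose number is controlled; (b) bound the total displacement caused by bad nodes in a phase by $b_\succ\cdot\binom{\alpha}{2}$, which is $\le (2\alpha\beta/\epsilon)^2\binom{\alpha}{2}\,\bar n_\succ^{t-1}$; (c) observe $n$-scale fluctuations of the $n_{ab}$ gaps are at most this plus the (helpful) moves, so the gap only grows, giving preservation deterministically on the event that Lemma~\ref{mixing}'s hypotheses are met — and since Lemma~\ref{mixing} is deterministic, the only randomness left is in the contraction coin flips, so the $\exp(-\Theta(n))$ term actually comes just from the Chernoff concentration needed when $\bar n_\succ^{t-1}$ is small relative to $n$ versus the deterministic $b_\succ$ bound. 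I would present (a)–(c) first since they are the conceptual core, then the Chernoff computation for the contraction, and finally collect the failure probabilities via a union bound over the $\binom{\alpha}{2}$ rounds, absorbing the polynomial-in-$\alpha$ factors into the definition of $f(\alpha)$.
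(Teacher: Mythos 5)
Your overall architecture matches the paper's: bound the bad nodes via Lemma~\ref{bad} and the smallness of $\beta$, run a joint induction over the ${\alpha\choose 2}$ rounds of the phase to keep a (slightly degraded) Condorcet property and hence control which swaps can go the wrong way, convert the surviving non-bad wrong nodes with probability $\alpha^{-\Theta(\alpha^2)}$ each, apply Chernoff for the contraction, and use a second Chernoff (via a case distinction on the size of the gap $n_{ab}-n_{ba}$) for the preservation of $\epsilon$-Condorcet, which is where the $\exp(-\Theta(n))$ term comes from.

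Two steps need repair. First, your ``at least one correcting swap per phase with probability $\ge 1-1/e$'' is not the event you need: the contraction is on the number of nodes whose order is not \emph{exactly} $\succ$, so a node only counts if it is fully sorted by the end of the single phase of ${\alpha\choose 2}$ rounds --- there are no ``$\Theta(\alpha^2)$ sub-phases'' inside a phase to chain over, and the per-phase correction probability is not a constant. The paper's device is to demand that the node's ${\alpha\choose 2}$ random pair choices follow the bubble-sort schedule exactly; this event has probability ${\alpha\choose 2}^{-{\alpha\choose 2}}$ (which is where the form of $f(\alpha)$ really comes from), is determined by the node's own coins only (hence independent across nodes), and guarantees the node ends at $\succ$ provided it is never bad during the phase. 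Your final numbers land in the right ballpark, but the mechanism as written does not produce a well-defined independent indicator per node.

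Second, ``non-bad at the start of the phase'' is not enough: a node can become bad mid-phase because its neighbors move, and then one of its swaps can go away from $\succ$. You must exclude every node that is bad at \emph{any} round of the phase, i.e., bound $\sum_{i=0}^{r-1} b_{\succ}^{t,i}$ using the per-round induction hypotheses $b_{\succ}^{t,i}\le g\,\bar{n}_{\succ}^{t,i}$ and $\bar{n}_{\succ}^{t,i}\le (1+g)\bar{n}_{\succ}^{t,i-1}$; this is what forces the particular choice of the constant $g_{\alpha,\epsilon}$ in the paper and what justifies both the gain term ($X$ successes among never-bad nodes) and the loss term (at most $2f(\alpha)\bar{n}_{\succ}^{t-1}$ nodes ever swap against $\succ$). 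Relatedly, your claim in step (c) that preservation of $\epsilon$-Condorcet holds deterministically is false --- bad nodes can swap against $\succ$, and when the gap is only slightly above $\epsilon n$ this matters --- but your own fallback (a Chernoff bound showing that in that case linearly many nodes with inverted $a,b$ and correct neighborhoods flip toward $a\succ b$) is exactly the paper's argument; keep that version and drop the deterministic claim.
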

\textsc{Proof Sketch.}
Let $\mathcal{P}_t^{(i)}$ for $i=1,\cdots, r:={\alpha \choose 2}$ be the profile at the end of the $i$-th round in phase $t$ and $\mathcal{P}_t^{(0)}=\mathcal{P}_{t-1}$. Furthermore, let us use the shorthand $\bar{n}_{\succ}^{t,i}$ for $\bar{n}_{\succ}^{\mathcal{P}_t^{(i)}}$ and $b_{\succ}^{t,i}$ for $b_{\succ}^{\mathcal{P}_t^{(i)}}$. We claim that in the profile $\mathcal{P}_t^{(i)}$, the order $\succ$ is $\delta$-Condorcet for $\delta\ge \epsilon -2i*g_{\alpha,\epsilon}$ and $b_{\succ}^{t,i}\le g_{\alpha,\epsilon} \bar{n}_{\succ}^{t,i}$, for $g_{\alpha,\epsilon}:=2\epsilon f(\alpha)/(re^{1/2})$. This statement can be proven by induction over $i$. For the base case of $i=0$, we need to apply Lemma~\ref{bad} and utilize the fact that $\beta$ is a sufficiently small constant (more precisely, we need $\beta\le \sqrt{g_{\alpha,\epsilon}} (\epsilon/4\alpha)$, which is acceptable since both $\epsilon$ and $\alpha$ are constants). For the induction step, we additionally need to use the fact that moving from round $i-1$ to $i$ at most $b_{\succ}^{t,i-1}\le g_{\alpha,\epsilon}n$ nodes could have changed a pair of alternatives in violation with order of $\succ$ and the fact that $\epsilon -2i*g_{\alpha,\epsilon}\ge \epsilon/2$ for $i\le r$.

By the above statement, we have $\sum_{i=0}^{r-1}b_{\succ}^{t,i}\le g_{\alpha,\epsilon}\sum_{i=0}^{r-1}\bar{n}_{\succ}^{t,i}$. Using $\bar{n}_{\succ}^{t,i}\le (1+g_{\alpha,\epsilon})\bar{n}_{\succ}^{t,i-1}$ and $\bar{n}_{\succ}^{t,0}=\bar{n}_{\succ}^{t-1}$ gives us $\sum_{i=0}^{r-1}b_{\succ}^{t,i}\le g_{\alpha,\epsilon}\bar{n}_{\succ}^{t-1}\sum_{i=0}^{r-1}(1+g_{\alpha,\epsilon})^i$. Replacing the values of $r$ and $g_{\alpha,\epsilon}$, using the estimate $1+x\le \exp(x)$, and some small calculations, we get $\sum_{i=0}^{r-1}b_{\succ}^{t,i}\le 2f(\alpha)\bar{n}_{\succ}^{t-1}$.

So far, we showed that there is a set $L:=\{u_1,\cdots, u_{l}\}$ for some $l\ge (1-2f(\alpha))\bar{n}_{\succ}^{t-1}$ such that for every $u\in L$ and every pair of alternatives, more than half of $u$'s neighbors agree with the relative order imposed by $\succ$ during all rounds in phase $t$ and regardless of the random choices. Assume that the Bernoulli random variable $x_k$, for $1 \le k \le l$, is 1 if and only if the pairs of alternatives selected in node $u_k$ during the ${\alpha \choose 2}$ rounds in the phase $t$ match exactly the pairs chosen by the Bubble sort. For example, if $\alpha=4$, then the pairs are chosen are respectively $(4,3)$, $(3,2)$, $(2,1)$, $(4,3)$, $(3,2)$, $(4,3)$, where $(i,j)$ means alternatives in position $i$ and $j$ are selected. It is straightforward to observe that if $x_k=1$, then node $u_k$ will have order $\succ$ at the end of phase $t$ (by a simple induction over the alternatives). Note that the other direction is not true, i.e., $u_k$ having order $\succ$ at the end does not imply $x_k=1$. The variables $x_k$'s are defined in this way to ensure that they are independent. Let us define $X:=\sum_{k=1}^{l}x_k$. Since $\mathbb{P}[x_k=1]=1/{\alpha \choose 2}^{{\alpha \choose 2}}$, we have $\mathbb{E}[X]=l/{\alpha \choose 2}^{{\alpha \choose 2}}\ge (1-2f(\alpha))\bar{n}_{\succ}^{t-1}/\alpha\alpha^{(2\alpha^2)}\ge \bar{n}_{\succ}^{t-1}/(2\alpha^{(2\alpha^2)})$, where in the last inequality we used $f(\alpha)\le 1/4$. Applying the Chernoff bound (see Section~\ref{preliminaries}) gives us $\mathbb{P}[X\le 3\bar{n}_{\succ}^{t-1}/(8\alpha^{(2\alpha^2)})]\le \exp(-\bar{n}_{\succ}^{t-1}/(64\alpha^{(2\alpha^2)}))$. Since the number of nodes with order $\succ$ increases at least by $X$ and decreases at most by $2f(\alpha)\bar{n}_{\succ}^{t-1}$, we can conclude that $\bar{n}_{\succ}^{t}\le (1-f(\alpha))\bar{n}_{\succ}^{t-1}$ w.p. at least $1-\exp(-f(\alpha)\bar{n}_{\succ}^{t-1}/8)$.

To finish the proof, it remains to show that $\succ$ is $\epsilon$-Condorcet in $\mathcal{P}^{t}$. Consider two alternatives $a,b$ such that $a\succ b$. If the difference between the number of nodes which prefer $a$ over $b$ and the nodes which prefer $b$ over $a$ in $\mathcal{P}_{t-1}$ is at least $\epsilon n+ 2f(\alpha)\bar{n}_{\succ}^{t-1}$, then the profile remains $\epsilon$-Condorcet after phase $t$ since at most $2f(\alpha)\bar{n}_{\succ}^{t-1}$ nodes can move $b$ over $a$ during phase $t$. Otherwise, there are linearly many nodes which rank $b$ over $a$ but their neighborhood ranks $a$ higher during all rounds in phase $t$. Similar to above, using Chernoff bound we can prove w.p. $1-\exp(-\Theta(n))$, the number of nodes which move $a$ higher than $b$ is at least $2f(\alpha)\bar{n}_{\succ}^{t-1}$, which guarantees the $\epsilon$-Condorcet property w.r.t. $a$ and $b$. By a union bound, this holds for all pairs of alternatives w.p. at least $1-{\alpha \choose 2} \exp(-\Theta(n))=1-\exp(-\Theta(n))$. \qed

\begin{theorem}
\label{thm-expander}
Consider the SPD model on a $d$-regular graph $G=(V,E)$. Assume that $\lambda(G)\le \beta$ for a sufficiently small constant $\beta>0$. If the order $\succ$ is $\epsilon$-Condorcet in the initial profile $\mathcal{P}_0$, for some constant $\epsilon>0$, then $\succ$ wins in $\mathcal{O}(\log n)$ rounds w.h.p.
\end{theorem}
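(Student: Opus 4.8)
The plan is to derive Theorem~\ref{thm-expander} by iterating Proposition~\ref{proposition-expander} over consecutive phases, starting from the initial profile $\mathcal{P}_0$, which is $\epsilon$-Condorcet by hypothesis. Since $\epsilon$ and $\alpha$ are constants, $f(\alpha)$ is a positive constant, so each successful application of the proposition shrinks $\bar n_{\succ}^{t}$ by a constant factor $(1-f(\alpha))$ while preserving the $\epsilon$-Condorcet property. Hence after $T=\Theta(\log n/\log(1/(1-f(\alpha)))) = \mathcal{O}(\log n)$ phases, we would have $\bar n_{\succ}^{T} < 1$, i.e. $\bar n_{\succ}^{T}=0$, meaning every node holds order $\succ$; since $\succ$ is then held by all nodes and the profile is $\epsilon$-Condorcet, it is fixed, so $\succ$ wins. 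Each phase is ${\alpha\choose2}=\mathcal{O}(1)$ rounds, so $\mathcal{O}(\log n)$ phases is $\mathcal{O}(\log n)$ rounds.

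The delicate point is bookkeeping the failure probabilities across the $\mathcal{O}(\log n)$ phases, because the per-phase success probability $1-\exp(-\bar n_{\succ}^{t-1} f(\alpha)/8)-\exp(-\Theta(n))$ degrades as $\bar n_{\succ}^{t-1}$ becomes small: once $\bar n_{\succ}^{t-1}$ drops to $\Theta(\log n)$ or below, the term $\exp(-\bar n_{\succ}^{t-1} f(\alpha)/8)$ is no longer $o(1/\log n)$ and a naive union bound over all phases fails. The fix I would use is a two-stage argument. In the first stage, while $\bar n_{\succ}^{t-1} \ge C\log n$ for a suitable constant $C$ (chosen so that $\exp(-C\log n\, f(\alpha)/8)\le n^{-2}$, say), each phase succeeds w.p. $1-\mathcal{O}(n^{-2})$, and since there are only $\mathcal{O}(\log n)$ such phases, a union bound gives that all of them succeed w.h.p.; after this stage $\bar n_{\succ}$ has been driven below $C\log n$. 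For the second stage I would argue directly: once $\bar n_{\succ}$ is at most $C\log n$, we need the remaining $\mathcal{O}(\log\log n)$ or $\mathcal{O}(\log n)$ phases to keep decreasing it to $0$; here I would not rely on the $(1-f(\alpha))$-contraction probability but instead observe that the residual set of at most $C\log n$ "non-$\succ$" nodes each has a strict majority of neighbors agreeing with $\succ$ on every contested pair (as established in the proof of the proposition, the analogue of the set $L$ covers all but a $2f(\alpha)$-fraction, which for a sufficiently small residual can be pushed to cover all of them), so with probability $1-\mathcal{O}(n^{-2})$ per phase — using that even a single node converging requires only the Bubble-sort sequence of $\binom{\alpha}{2}$ choices, an event of constant probability, and a Chernoff/union bound over $\mathcal{O}(\log n)$ nodes suffices — the whole residual set converges within $\mathcal{O}(\log n)$ further phases.

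Concretely, I would structure the write-up as: (i) fix $T_1 = c_1\log n$ with $c_1$ large enough that $(1-f(\alpha))^{T_1}\le C\log n/n$, and show by induction on $t\le T_1$, taking a union bound over these $T_1 = \mathcal{O}(\log n)$ phases, that w.h.p.\ $\succ$ stays $\epsilon$-Condorcet throughout and $\bar n_{\succ}^{t}\le \max\{(1-f(\alpha))^{t}n,\, C\log n\}$; (ii) conditioned on reaching a phase with $\bar n_{\succ}\le C\log n$ while still $\epsilon$-Condorcet, run a further $T_2 = \mathcal{O}(\log n)$ phases and show w.h.p.\ $\bar n_{\succ}$ reaches $0$ and $\epsilon$-Condorcet is preserved, again via the Chernoff-type estimate inside the proposition's proof but now with the cruder union bound over the few remaining nodes; (iii) combine via a union bound of two $o(1)$ failure events and conclude the total round count is $\binom{\alpha}{2}(T_1+T_2) = \mathcal{O}(\log n)$. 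The main obstacle, as noted, is step (ii): making the endgame rigorous when $\bar n_{\succ}$ is sublinear, so the exponential-in-$\bar n_{\succ}$ failure bound from the proposition is too weak and one must instead exploit that a handful of nodes, each with a clear neighbor-majority for $\succ$, converge independently with constant probability per phase.
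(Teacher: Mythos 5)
Your first stage is essentially the paper's own argument (the paper iterates Proposition~\ref{proposition-expander} down to the threshold $(\log\log n)^2$ rather than $C\log n$, but the union-bound bookkeeping is the same), and your two-stage structure is the right one. The gap is in your endgame. You claim that once $\bar n_{\succ}\le C\log n$, the set $L$ of residual nodes with a strict neighbor-majority for $\succ$ on every contested pair ``can be pushed to cover all of them.'' It cannot: Lemma~\ref{bad} only gives $b_{\succ}^{\mathcal{P}}\le (2\alpha\lambda/\delta)^2\,\bar n_{\succ}^{\mathcal{P}}$, i.e., the bad nodes are a constant \emph{fraction} of $\bar n_{\succ}$, and making $\beta$ small only shrinks that constant --- it does not make the count vanish until $\bar n_{\succ}$ itself drops below the constant threshold $(\delta/(2\alpha\lambda))^2$. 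Since your stage-1 union bound forces $C\ge 16/f(\alpha)$, at the start of stage 2 there can still be $\Theta(\log n)$ bad nodes. These nodes (including nodes currently holding $\succ$) can flip pairs against $\succ$, so $\bar n_{\succ}$ is not monotone; in the worst case it grows by a factor $1+\Theta(f(\alpha))$ in a phase, and over your $T_2=\Theta(\log n)$ phases this can compound to a polynomial factor, destroying the premise that the residual set stays of logarithmic size. Hence the per-node ``Bubble-sort succeeds with constant probability, union bound over $\mathcal{O}(\log n)$ nodes'' step is not available as stated.

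The paper closes exactly this regime differently: it runs stage 1 only down to $(\log\log n)^2$ stragglers, and then observes that Proposition~\ref{proposition-expander} still yields a constant-factor contraction with probability at least $3/4$ per phase (using $\bar n_{\succ}^{t-1}\ge d/4$ and the fact that $\lambda\le\beta$ forces $d$ to exceed an arbitrarily large constant), while in the failure case the count grows by at most a constant factor. Modelling this as a biased random walk (a gambler losing a constant fraction of their current money w.p.\ $3/4$ and gaining one w.p.\ $1/4$) and applying a Chernoff-type bound shows the count drops below $d/4$ within $\mathcal{O}(\log\log\log n)$ phases w.h.p.; from there every node has at least a $3/4$ fraction of $\succ$-neighbors and the process finishes in a constant number of further phases. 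To repair your write-up, replace your step (ii) with such a biased-walk/supermartingale argument (or Azuma--Hoeffding on the original process) rather than the all-nodes-are-good union bound.
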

\textbf{Minimum Cost Solution.} One can infer that for a graph $G$ which satisfies the conditions of Theorem~\ref{thm-expander}, $\mathcal{MC}(G)\ge (1/2-\epsilon)n(\alpha -1)$, for any constant $\epsilon>0$, because a placement with cost at most $(1/2-\epsilon)n(\alpha -1)$ does not occupy the first position in at least $(1/2+\epsilon)n$ nodes. (Some details are left out.)

\textbf{Tightness of Convergence Time.} Consider a $d$-regular graph $G$, with constant $d$, which satisfies the conditions of Theorem~\ref{thm-expander}. Consider an arbitrary node $v$ and let $B$ be the set of nodes whose distance from $v$ is at most $t^*=(\log_d n)/2-1$. Note that $|B|\le \sum_{i=0}^{t^*}d^{i}\le d^{t^*+1}=\sqrt{n}$. Assume that initially all nodes in $B$ have order $\succ^{\prime}$ and nodes in $V\setminus B$ have order $\succ\ne \succ^{\prime}$. This clearly satisfies the Condorcet condition of Theorem~\ref{thm-expander}, and thus $\succ$ wins in $\mathcal{O}(\log n)$ rounds w.h.p. However, it must take at leas $t^*=\Omega(\log n)$ rounds until $v$ has order $\succ$. Therefore, the logarithmic bound is asymptotically tight.

\textbf{Conditions.} Can some of the conditions of Theorem~\ref{thm-expander} be lifted or relaxed? The statement requires $\alpha$ to be a constant, $\lambda$ to be smaller than a sufficiently small constant $\beta$ (which enforces $d\ge C$ for an arbitrarily large constant $C$), the graph to be regular, and the process to be synchronous. In the appendix, Section~\ref{limit-appendix}, we argue how most of these can be potentially removed or relaxed using our proof techniques in combination with other tools.

\subsection{Number of Solutions}
We study the number of solutions $\mathcal{NS}(G)$ for a graph $G$. We particularly prove a tight bound on $\mathcal{NS}(C_n)$ for a cycle $C_n$ in Theorem~\ref{ns}, whose full proof is given in the appendix, Section~\ref{ns-appendix}.
\begin{theorem}
\label{ns}
In both the APD and SPD model on a cycle $C_n$, $\mathcal{MC}(C_n)=(2\lfloor n/3\rfloor + (n\mod 3)) (\alpha -1)$ and $\mathcal{NS}(C_n)=\Tilde{\Theta}(\psi^n)$ for some $\psi\in (\alpha^{1/3}, \alpha^{1/3}+0.22)$, where $\Tilde{\Theta}$ hides polynomial terms in $n$ and $\alpha$.
\end{theorem}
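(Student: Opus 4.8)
The plan is to first characterize the solutions. I claim a placement $p:V\to\{1,\dots,\alpha\}$ is a solution iff the set $L:=p^{-1}(1)$ of nodes pinned to the first position is either all of $V$ or decomposes $C_n$ into maximal arcs of $L$-nodes, each of length at least $2$, separated by single non-$L$ nodes; equivalently, $V\setminus L$ is independent and $L$ induces no isolated vertex, whereas the positions of the non-$L$ nodes are completely free in $\{2,\dots,\alpha\}$. For \emph{necessity} I would use that ``wins regardless of the random choices'' means ``wins under every legal execution'', so one bad execution ruins a placement. If two adjacent nodes $u,w$ both lie outside $L$, put the designated winner $a$ in positions $p_u,p_w\ge 2$ and some challenger $b$ first at both; since on $C_n$ a degree-$2$ node reverses the relative $(a,b)$-order only when \emph{both} its neighbours disagree with it, $u$ is blocked by $w$ and $w$ by $u$, so $b$ stays ahead of $a$ on $\{u,w\}$ forever and $a$ never wins. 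If instead some $w\in L$ has both neighbours outside $L$, put $b$ in $w$'s second position and first at both neighbours; one update demotes $a$ at $w$, after which $w$ together with its two neighbours is a frozen $\{b\succ a\}$-triple. For \emph{sufficiency}, a ``first-to-demote-$a$'' argument shows no $L$-node ever moves $a$ out of first position (the hypothetical first such node would need a neighbour that is still an $L$-node holding $a$ first to disagree with it — impossible); then every non-$L$ node permanently sees both neighbours ranking $a$ first, so in the (guaranteed) fixed profile $a$ must top them as well, else the element directly above $a$ there could still be swapped. All of this is schedule-independent and carries over to the SPD model (frozen triples stay frozen under simultaneous updates, and every other node may always draw a non-adjacent pair), so APD and SPD have the same solution set.

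\textbf{Step 2: the minimum cost.} Since lowering a node's chosen position never raises its cost, an optimal solution pins a \emph{minimum} good set $L$ to the first position and sends every other node to the last position (cost $0$). If there are $j$ non-$L$ nodes, the cycle splits into $j$ blocks, each an $L$-arc of length $\ge 2$ followed by one non-$L$ node, so $n\ge 3j$, i.e.\ $j\le\lfloor n/3\rfloor$ and $|L|\ge n-\lfloor n/3\rfloor=2\lfloor n/3\rfloor+(n\bmod 3)$; this is attained with $j=\lfloor n/3\rfloor$ nearly-equal blocks, giving $\mathcal{MC}(C_n)=(2\lfloor n/3\rfloor+(n\bmod 3))(\alpha-1)$.

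\textbf{Step 3: counting solutions.} By Step 1, $\mathcal{NS}(C_n)=\sum_{L\text{ good}}(\alpha-1)^{\,n-|L|}$, because the positions of the non-$L$ nodes are unconstrained in $\{2,\dots,\alpha\}$. Reading the cycle as a cyclic word over the ``blocks'' above — each of length $\ge 3$ and carrying weight $\alpha-1$ — turns this into a restricted-tiling (transfer-matrix) count: the linear analogue $f_n$ obeys $f_n=f_{n-1}+(\alpha-1)f_{n-3}$, with characteristic polynomial $t^3-t^2-(\alpha-1)$, and on the cycle $\mathcal{NS}(C_n)=\tilde{\Theta}(\psi^n)$ where $\psi$ is the dominant root of that cubic (the $\tilde{\Theta}$ swallowing the polynomial prefactor and the negligible $L=V$ term). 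Finally one locates $\psi$: evaluating the cubic at $t=\alpha^{1/3}$ gives $1-\alpha^{2/3}<0$, so $\psi>\alpha^{1/3}$, and a matching explicit sign check of the cubic near $\alpha^{1/3}$ yields the upper bound $\psi<\alpha^{1/3}+0.22$.

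\textbf{Main obstacle.} The crux is the necessity half of Step 1: for every placement that fails the structural condition one must actually construct an initial profile together with a legal execution (valid for both the APD and SPD schedules) under which the bubble-sort-like dynamics leaves $a$ permanently below the first position at some node, which requires tracking the relevant pairwise order at the deadlock and verifying it can never be unstuck. By comparison, the tiling recurrence and the cubic-root estimate in Step 3, and the extremal counting in Step 2, are routine.
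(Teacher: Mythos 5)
Your proposal follows essentially the same route as the paper's: your structural characterization of solutions (arcs of first-position nodes of length at least two separated by single non-pinned nodes) is exactly the paper's condition that every three consecutive nodes contain at least two first-position nodes; the two deadlock constructions for necessity, the "first-to-demote" argument for sufficiency, the extremal count giving $\mathcal{MC}(C_n)=(2\lfloor n/3\rfloor+(n\bmod 3))(\alpha-1)$, and the recurrence $p(n)=p(n-1)+(\alpha-1)p(n-3)$ with characteristic polynomial $t^3-t^2-(\alpha-1)$ all coincide with the paper's argument. The only cosmetic difference is that you count cyclic tilings directly via a transfer matrix, whereas the paper sandwiches the cycle count between path counts via $p(n-2)\le s(n)\le p(n)$; both yield $\Tilde{\Theta}(\psi^n)$ for the dominant root $\psi$.

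However, the one quantitative step you declare routine --- the sign check giving $\psi<\alpha^{1/3}+0.22$ --- does not go through; the inequality is in fact false. Writing $s=\alpha^{1/3}$ and $c=0.22$, and using $\alpha-1=s^3-1$, one gets
\[
P(s+c)=(3c-1)s^2+(3c^2-2c)s+(c^3-c^2+1)=-0.34\,s^2-0.2948\,s+0.9622,
\]
which is negative for every $\alpha\ge 3$, so the dominant root satisfies $\psi>\alpha^{1/3}+0.22$. Concretely, for $\alpha=3$ the real root of $t^3-t^2-2$ is $\psi\approx 1.696$ while $3^{1/3}+0.22\approx 1.662$, and asymptotically $\psi-\alpha^{1/3}\to 1/3$. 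A correct uniform window is $(\alpha^{1/3},\,\alpha^{1/3}+0.35)$: for $c=0.35$ the quadratic in $s$ above has positive leading coefficient $0.05$ and negative discriminant, hence $P(s+0.35)>0$ for all $s$. To be fair, the paper's own proof hides this step behind "some standard calculations" and its stated constant $0.22$ suffers from the same defect, so this is really a flaw in the theorem statement rather than in your strategy; but as written, your upper-bound step would fail if you actually carried out the sign check you invoke.
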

\textsc{Proof Sketch.} 
Let $\mathcal{S}_n$ be the set of all placements on $C_n$ such that for every three consecutive nodes for at least two of them the first position is chosen. We can show that $\mathcal{S}_n$ is equal to the set of all solutions in both the APD and SPD model on $C_n$. Thus, $\mathcal{MC}(C_n)$ is equal to the minimum cost among the elements of $\mathcal{S}_n$, which we claim to be $(2\lfloor n/3\rfloor+(n\mod 3))(\alpha-1)$.

Let $s(n):=|\mathcal{S}_n|$ and $p(n)$ be $s(n)$ but for a path $P_n$ instead of $C_n$. We observe that $p(n-2)\le s(n)\le p(n)$. Thus, to approximate $s(n)$, it suffices to calculate $p(n)$. For that, we need to solve the homogeneous linear difference equation $p(n)=p(n-1)+(\alpha-1)p(n-3)$. By solving the characteristic equation $\lambda^3-\lambda^2-(\alpha-1)=0$ (derived from $p(n)=p(n-1)+(\alpha-1)p(n-3)$) and some standard calculations, we get $p(n)=\Tilde{\Theta}(\psi^n)$ for some $\psi\in (\alpha^{1/3}, \alpha^{1/3}+0.22)$. Finally, since $p(n-2)\le s(n)\le p(n)$ and $\Tilde{\Theta}$ hides polynomial terms in $\alpha$, we can conclude that $s(n)=\Tilde{\Theta}(\psi^n)$. Please refer to the appendix, Section~\ref{ns-appendix}, for a complete proof. \qed

For an empty graph $E_n$ (a graph with $n$ nodes and zero edges), we have $\mathcal{NS}(E_n)=1$ because the only possible solution is to choose the first position in all nodes. For a complete graph $K_n$, we have $\mathcal{NS}(K_n)=\sum_{k=\lfloor n/2\rfloor+1}^{n} {n \choose k}(\alpha-1)^{n-k}$ because a solution requires at least $\lfloor n/2\rfloor+1$ nodes to choose the first position. As an upper bound, we have $\mathcal{NS}(K_n)\le (\alpha-1)^{n/2}\sum_{k=0}^n{n \choose k}=(\alpha-1)^{n/2}2^n$. Furthermore, $\mathcal{NS}(K_n)\ge (\alpha-1)^{\lceil n/2\rceil-1}{n \choose \lfloor n/2\rfloor+1}=(\alpha-1)^{\lceil n/2\rceil-1}*\Theta(2^n/\sqrt{n})$, where we used the estimate ${n \choose \lfloor n/2\rfloor+1}=\Theta(2^n/\sqrt{n})$. Therefore, we have $\mathcal{NS}(K_n)=\Tilde{\Theta}((4(\alpha-1))^{n/2})$.

It would be interesting to characterize the graph parameters which control the value of $\mathcal{NS}(G)$. Since we have proven that $\mathcal{NS}(E_n)\le \mathcal{NS}(C_n)\le \mathcal{NS}(K_n)$, one might be tempted to conjecture that $\mathcal{NS}(G)$ grows in the number of edges in $G$. However, this is not true; for example, one can check that for a star graph $S_n$ (which has only $n-1$ edges) there are almost as many solutions as in $K_n$. This is true because a placement is a solution if the internal node and at least $\lceil (n-1)/2\rceil$ of the $n-1$ leaves choose the first position. For $\mathcal{NS}(G)$ to be large, many placements should result in a solution. Roughly speaking, this implies that many nodes should have a similar level of ``power'', which can be captured in terms of graph parameters like regularity and vertex-transitivity. Investigating this problem further is left to future work.

\section{Random Preference Diffusion}
\label{rc}
\begin{theorem}
\label{rc-thm}
Consider the Random PD on a graph $G=(V,E)$.
\begin{enumerate}[I]
    \item The process always reaches a fixed profile, where the order of all nodes is equal to some order $\succ^f$.
    \item For $a,b\in \mathcal{A}$, $\mathbb{P}[a\succ^f b]=Z_{0}^{ab}/(2m)$, where $Z_{0}^{ab}$ is the summation of the degree of nodes which rank $a$ higher than $b$ initially.
    \item The process takes $\mathcal{O}(n^4)$ rounds in expectation.
\end{enumerate}
\end{theorem}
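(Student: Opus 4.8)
Everything will be driven by one quantity. For two alternatives $a,b$ and the profile $\mathcal{P}_t$ at round $t$, let $X_v^t=1$ if $a\succ_v b$ in $\mathcal{P}_t$ and $X_v^t=0$ otherwise, and put $Z_t^{ab}:=\sum_{v\in V}d(v)X_v^t=\sum_{v\in V_{ab}^{\mathcal{P}_t}}d(v)\in[0,2m]$. I first show $Z^{ab}$ is a martingale: conditioned on $\mathcal{P}_t$, node $v$ keeps $X_v^{t+1}=X_v^t$ w.p.\ $1-q$ and otherwise copies a u.a.r.\ neighbour, so $\mathbb{E}[X_v^{t+1}\mid\mathcal{P}_t]=(1-q)X_v^t+\frac{q}{d(v)}\sum_{u\sim v}X_u^t$; multiplying by $d(v)$, summing over $v$, and using the double-counting identity $\sum_{v}\sum_{u\sim v}X_u^t=\sum_{u}d(u)X_u^t$ gives $\mathbb{E}[Z_{t+1}^{ab}\mid\mathcal{P}_t]=Z_t^{ab}$. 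Moreover $Z^{ab}$ is absorbed at $0$ and at $2m$: if all nodes prefer $a$ to $b$, copying any neighbour preserves this, and symmetrically for $0$. Since every $d(v)\ge1$, the first time $T_{ab}$ that $Z_t^{ab}\in\{0,2m\}$ is exactly the first time all nodes agree on the order of $a,b$; and a profile is fixed with a common order $\succ^f$ precisely when every $Z^{ab}$ is absorbed, i.e.\ at $T:=\max_{a\neq b}T_{ab}$. Thus bounding $\mathbb{E}[T]$ settles parts I and III together, while part II will come from optional stopping on $Z^{ab}$ itself.

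To control $T_{ab}$ I look at the increments of this martingale. The $n$ nodes update independently, so $Z_{t+1}^{ab}-Z_t^{ab}=\sum_v Y_v$ with $Y_v:=d(v)(X_v^{t+1}-X_v^t)$ conditionally independent given $\mathcal{P}_t$ and of mean zero, hence $\mathbb{E}[(Z_{t+1}^{ab}-Z_t^{ab})^2\mid\mathcal{P}_t]=\sum_v d(v)^2\,\mathrm{Var}(X_v^{t+1}\mid\mathcal{P}_t)$, where $X_v^{t+1}$ is Bernoulli with mean $(1-q)X_v^t+\frac{q}{d(v)}|\Gamma(v)\cap V_{ab}^{\mathcal{P}_t}|$. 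The key step is: if $\mathcal{P}_t$ is not at consensus on $\{a,b\}$ then, by connectivity, some edge $\{u,w\}$ has $X_u^t\neq X_w^t$; splitting into the cases $X_u^t=0$ and $X_u^t=1$ and using that $u$ has at least one disagreeing neighbour, the mean of $X_u^{t+1}$ lies strictly in $(0,1)$ and a short computation gives $d(u)^2\,\mathrm{Var}(X_u^{t+1}\mid\mathcal{P}_t)\ge q(1-q)$ (the factor $d(u)^2$ exactly absorbs the $1/d(u)$ loss in the mean). So $\mathbb{E}[(Z_{t+1}^{ab}-Z_t^{ab})^2\mid\mathcal{P}_t]\ge c_q:=q(1-q)>0$ whenever $t<T_{ab}$.

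Feeding this into the Doob decomposition, $M_t:=(Z_t^{ab})^2-\sum_{s<t}\mathbb{E}[(Z_{s+1}^{ab}-Z_s^{ab})^2\mid\mathcal{P}_s]$ is a martingale, so optional stopping at the bounded time $T_{ab}\wedge N$ gives $c_q\,\mathbb{E}[T_{ab}\wedge N]\le\mathbb{E}\big[\sum_{s<T_{ab}\wedge N}\mathbb{E}[(Z_{s+1}^{ab}-Z_s^{ab})^2\mid\mathcal{P}_s]\big]=\mathbb{E}[(Z_{T_{ab}\wedge N}^{ab})^2]-(Z_0^{ab})^2\le(2m)^2$; letting $N\to\infty$ yields $\mathbb{E}[T_{ab}]\le(2m)^2/c_q=\mathcal{O}(m^2)$ and in particular $T_{ab}<\infty$ almost surely. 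Summing over the ${\alpha\choose2}$ pairs, $\mathbb{E}[T]\le\sum_{a\neq b}\mathbb{E}[T_{ab}]=\mathcal{O}(\alpha^2 m^2)=\mathcal{O}(n^4)$ (since $m\le{n\choose2}$ and the hidden constant may depend on $\alpha$ and $q$), which proves parts I and III. For part II, apply optional stopping to the bounded martingale $Z^{ab}$ at the a.s.-finite time $T_{ab}$: as $Z_{T_{ab}}^{ab}\in\{0,2m\}$ with $Z_{T_{ab}}^{ab}=2m$ iff $a\succ^f b$, we get $Z_0^{ab}=\mathbb{E}[Z_{T_{ab}}^{ab}]=2m\,\mathbb{P}[a\succ^f b]$, i.e.\ $\mathbb{P}[a\succ^f b]=Z_0^{ab}/(2m)$.

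The main obstacle is the uniform lower bound $\mathbb{E}[(Z_{t+1}^{ab}-Z_t^{ab})^2\mid\mathcal{P}_t]\ge c_q$ away from consensus: it relies on the independence of the per-node updates (so the increment's variance is a plain sum of per-node Bernoulli variances with no cross terms), on connectivity (to exhibit a disagreeing edge), and on a small but careful estimate showing a single disagreeing node's contribution does not decay with its degree. The remaining work — the Doob-decomposition bookkeeping, the truncate-and-take-limits in the optional stopping theorem, and reconstructing $T$ from the $T_{ab}$'s — is routine. (Part III could alternatively be obtained via the coalescing-random-walk dual of the voter dynamics, which is a reversible walk on $G$, but the martingale route is self-contained and is exactly what part II needs anyway.)
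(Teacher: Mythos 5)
Your proof is correct, and for part III it takes a genuinely different route from the paper. The paper proves part I by a ``positive probability of reaching an absorbing consensus from any profile'' argument, proves part II with the degree-weighted martingale $Z_t^{\succ}$ indexed by \emph{full orders} and then sums over all orders ranking $a$ above $b$, and proves part III by running the process backwards to obtain a coalescing ``tokens random walk'' and citing the $\mathcal{O}(n^4)$ meeting-time bound of Coppersmith et al.\ (which also hands them the $\mathcal{O}(n\log n)$ expander bound for free). You instead work throughout with the pairwise martingale $Z_t^{ab}$, which gives part II a touch more directly via optional stopping, and you replace the random-walk duality by a self-contained quadratic-variation argument: the uniform lower bound $d(u)^2\,p_u(1-p_u)\ge q(1-q)$ at a disagreeing node $u$ (which checks out in both cases $X_u^t\in\{0,1\}$), the Doob decomposition of $(Z_t^{ab})^2$, and optional stopping at $T_{ab}\wedge N$. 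This buys a per-pair bound of $\mathcal{O}(m^2)$ -- better than $n^4$ on sparse graphs -- and elementary self-containedness, at the price of an extra $\binom{\alpha}{2}$ factor from the union over pairs (harmless in the paper's constant-$\alpha$ regime, whereas the coalescing-walk bound is $\alpha$-free) and of not recovering the expander improvement. Two small remarks: your claim that the $Y_v$ have conditional mean zero is false in general (only their sum does), but the identity you use is just ``variance of a sum of conditionally independent variables equals the sum of variances,'' so nothing breaks; and your derivation of parts I and III simultaneously from $\mathbb{E}[T]<\infty$ is fine and in fact slightly stronger than the paper's separate absorption argument for part I.
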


The proof of part III is given in the appendix, Section~\ref{rc-appendix}. It uses the results from~\cite{coppersmith1993collisions} regarding the meeting time in some random walks. We also discuss that a stronger bound of $\mathcal{O}(n\log n)$ exists for graphs with strong expansion properties. 

\textsc{Proof of I and II.}
We prove that from any profile, there is a non-zero probability to reach a fixed profile. Therefore, the process must reach a fixed profile eventually. Consider an arbitrary profile $\mathcal{P}$. Let $v$ be an arbitrary node and let $\succ$ be its order in $\mathcal{P}$. Define $\Gamma_t(v)$ to be the nodes whose distance from $v$ is equal to $t$. There is a non-zero probability that in the next round, all nodes in $\Gamma_1(v)$ (i.e., the neighbors of $v$) choose order $\succ$ and all other nodes keep their order unchanged. Repeating the same argument for $\Gamma_t(v)$ and $t\ge 2$, we can conclude that starting from $\mathcal{P}$, it is possible that the process reaches the profile where all nodes hold the order $\succ$.


Let the random variable $Z_t^{\succ}$, for $t \in\mathbb{N}$ and some order $\succ$, denote the summation of the degree of nodes whose order is equal to $\succ$ in $\mathcal{P}_t$. We prove that the sequence $Z_0^{\succ}, Z_1^{\succ}, Z_2^{\succ},\cdots$ is a discrete-time martingale, i.e., $\mathbb{E}[Z_t^{\succ}|Z_0^{\succ}, Z_1^{\succ}, \cdots, Z_{t-1}^{\succ}]=Z_{t-1}^{\succ}$. Since a node $v$ keeps its order w.p. $(1-q)$ and selects $\succ$ from its neighbors w.p. $q|\Gamma(v)\cap V_{\succ}^{\mathcal{P}_{t-1}}|/|\Gamma(v)|$, then $\mathbb{E}\left[Z_t^{\succ}|Z_0^{\succ}, Z_1^{\succ}, \cdots, Z_{t-1}^{\succ}\right]$ is equal to
\begin{align*}
&\sum_{v\in V_{\succ}^{\mathcal{P}_{t-1}}} (1-q) d(v) +\sum_{v\in V} q d(v) \frac{|\Gamma(v)\cap V_{\succ}^{\mathcal{P}_{t-1}}|}{d(v)}=
\\
&(1-q)\sum_{v\in V_{\succ}^{\mathcal{P}_{t-1}}} d(v) + q \sum_{v\in V} |\Gamma(v)\cap V_{\succ}^{\mathcal{P}_{t-1}}| =
\sum_{v\in V_{\succ}^{\mathcal{P}_{t-1}}} d(v) = Z^{\succ}_{t-1}
\end{align*}
where we used $\sum_{v\in V} |\Gamma(v)\cap V_{\succ}^{\mathcal{P}_{t-1}}|=\sum_{v\in V_{\succ}^{\mathcal{P}_{t-1}}} d(v)$.

Let $Z_f^{\succ}$ be the summation of the degree of nodes whose order is equal to $\succ$ in the final profile. We proved that the expected value of $Z_f^{\succ}$ is equal to $Z_0^{\succ}$. (Note that $Z_0^{\succ}$ is a fixed value.) Putting this in parallel with the fact that the process always reaches a fixed profile, we can conclude that the probability that $\succ$ wins is equal to $Z^{\succ}_0/2m$ since the summation of all degrees is equal to $2m$.

Consider two alternatives $a,b\in \mathcal{A}$. Let $\pmb{O}_{ab}$ be the set of orders which rank $a$ higher than $b$. Since the event that an order $\succ$ wins is disjoint from the event that $\succ^{\prime}\ne \succ$ wins, we have
\begin{equation*}
    \mathbb{P}[a\succ^{f} b]= \sum_{\succ \in \pmb{O}_{ab}}\mathbb{P}[\succ^f\textrm{is equal to}\succ]=\sum_{\succ \in \pmb{O}_{ab}}\frac{Z_0^{\succ}}{2m}=\frac{Z_{0}^{ab}}{2m}. \qed
\end{equation*}

\section{Experiments}
\label{experiments}
\begin{figure*}
\includegraphics[scale=0.55]{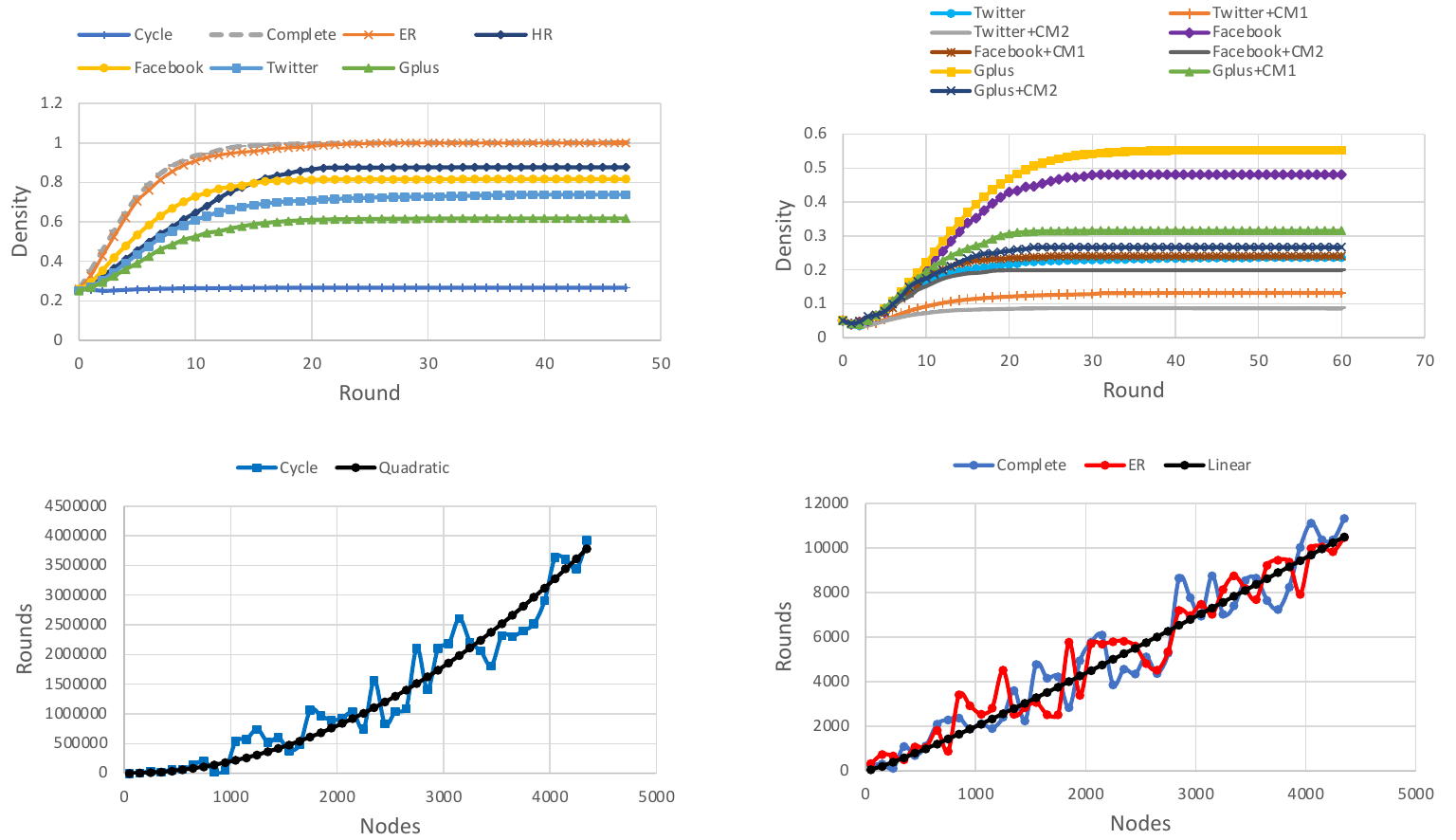}
  \caption{The density of order $\succ$ in the SPD process: (top-left) with an initial random profile with a bias towards $\succ$, (top-right) with only $5\%$ highest degree nodes choosing $\succ$, where $CM_1$ is the countermeasure to add $10$ random edges to each node and $CM_2$ is to give more weight to the preference of more similar neighbors. The convergence time of the Random PD process starting with a u.a.r. profile: (down-left) on cycle graph, (down-right) on complete and ER random graph.}
  \label{experiments-fig}
\end{figure*}

We have developed and conducted a set of experiments for the PD process on different synthetic graph structures (e.g., cycle and ER random graph) and graph data from real-world social networks (e.g., Facebook and Twitter). While some of our experiments focus on the aspects that we already studied theoretically, such as convergence time and (randomized) minimum cost solutions, we leverage our experiments to investigate other aspects of the model, such as introducing countermeasures to make the process more ``democratic'' and to create foundations for proposing some future research directions. (It is worth to stress that our theoretical findings are self-contained.)

\textbf{Set-up.} For the real-world graph data, we rely on the publicly available network datasets from~\cite{leskovec2014snap}. In the Facebook graph we have $n=4039$ and $m=88234$, in the Twitter graph $n=81306$ and $m=1768149$, and in the Gplus graph $n=107614$ and $m=13673453$. For all of our synthetic graph structures, including cycle and complete graph, we set $n=4039$, which is equivalent to the number of nodes in the Facebook graph. For the ER random graph $\mathcal{G}_{n,q}$, we set $q$ such that the expected degree of a node is equal to the average degree in the Facebook graph. Our experiments have also been conducted on Hyperbolic Random (HR) graph, a graph model that simulates real-world social networks. (HR graph is known to have fundamental properties observed in real-world networks such as small diameter, clustering property, and power law degree distribution. See~\cite{krioukov2010hyperbolic} for more details.)
The parameters are set such that the number of nodes and average degree are comparable to the ones in the Facebook graph. We also need to provide the exponent of the power-law degree distribution $\beta$ and the temperature $T$ as the input parameters. We set $\beta= 2.5$ and $T = 0.6$.

All of our experiments were executed 20 times and then the average output was considered. Furthermore, all experiments were conducted on 64-bit Ubuntu 18.04 LTS with an Intel Core i7-3930K 3.2 GHz CPU and 64 GB RAM and using Go programming language.

\textbf{Condorcet Winner.} Consider the SPD model with three alternatives $a, b, c$. We start with a random profile $\mathcal{P}_0$ where each node picks one of the 6 possible orders u.a.r., except that a $10\%$ advantage is given to the order $\succ$ which puts $a$, $b$, and $c$ in positions $1$, $2$, and $3$, respectively. For large values of $n$, the order $\succ$ almost surely will be $\epsilon$-Condorcet in $\mathcal{P}_0$, for some $\epsilon>0$. As demonstrated in Figure~\ref{experiments-fig} (top-left), for the complete graph and ER random graph, which enjoy strong expansion properties, all nodes adopt the order $\succ$ (where the convergence time is, arguably, of logarithmic order since $\log 4039 =8.3$). This is consistent with the statement of Theorem~\ref{thm-expander}. For the three studied social graphs, a somewhat similar behavior is observed. This can be explained by the fact that while the real-world social networks are not usually perfect expanders, like an ER random graph, they definitely enjoy a certain level of expansion. The HR graph (with parameters comparable to the Facebook graph) demonstrates a similar behavior (which provides support that the HR graph is suitable for modeling real-world social networks). On the other hand, in a cycle graph, which suffers from weak expansion properties, the density of nodes with order $\succ$ remains almost unchanged. We should mention that this behavior in cycles is not solely caused by weak expansion properties, but rather its combination with regularity and vertex-transitivity.

\textbf{Minimum Cost Solution.} Our explanations on Figure~\ref{experiments-fig} (top-left) imply that if an adversary chooses the set of nodes to bribe at random, engineering the outcome of the process on a real-world social graph would be very costly. However, an adversary might employ more efficient mechanisms. As discussed in Section~\ref{related}, the problem of finding a minimum cost solution for a given graph $G$ has been studied for various majority based models and in most scenarios, it is known to be NP-hard, cf.~\cite{mishra2002hardness,faliszewski2022opinion, elkind2009swap}. However, there exist greedy approaches, usually built on submodular function maximization techniques, which provide polynomial time approximation algorithms. We assume that the adversary bribes the nodes with the highest degrees (this is in some sense similar to what the greedy algorithms do and is computationally very light; thus, it is commonly used in the experimental set-ups, cf.~\cite{avin2019majority, out2021majority}. Our experiments, not included here, yield similar results when we use other measures such as betweenness or closeness instead of degree.). More precisely, we sort the nodes by their degree and the top $5\%$ are assigned the order $\succ$ and the remaining nodes have the reverse order $\succ^{\prime}$. Figure~\ref{experiments-fig} (top-right) demonstrates that for Twitter the density of nodes with order $\succ$ increases from $5\%$ to over $20\%$ and for Facebook and Gplus to almost $50\%$. Note that if the remaining $95\%$ nodes choose their order more uniformly among the $\alpha!$ orders ($6$ in our case) instead of choosing the reverse order $\succ^{\prime}$, then the density of $\succ$ will grow even more aggressively.

\textbf{Countermeasures.} So far, we observed that an adversary which pursues a rather smart strategy, such as bribing the highest degree nodes, can enforce their desired order (or alternative(s)) to a large part of the network with a relatively small cost. A natural question arises is whether one can design effective countermeasures to defeat such an adversary. Since we know that graphs with strong expansion properties are quite resilient against such adversarial attacks, a promising approach is to add some random edges to the graph, with the purpose of making it more expansive. (A similar countermeasure was already introduced in~\cite{out2021majority} for the Majority model.) A second countermeasure that we introduce is that each node gives more weight to the preferences of the nodes which are more ``similar'' to it. For two nodes $v,u$, let $S(v,u):=|\Gamma(v)\cap\Gamma(u)|/(d(v)+d(u))$ be the \emph{similarity} between $v$ and $u$. (This is analogous to similarity coefficients such as Jaccard and Szymkiewicz–Simpson coefficients, cf.~\cite{vijaymeena2016survey}.) Assume that we modify the PD model such that a node $v$ changes the order of two chosen adjacent alternatives $a$ and $b$ if the summation of the similarity of neighbors which disagree with its current order is more than the sum for the ones which agree. 

Figure~\ref{experiments-fig} (top-right) shows that both countermeasures CM1 (adding $10$ random edges to each node) and CM2 (integrating the similarity coefficient into the updating rule) substantially reduce the extent to which the adversary's desired order spreads. In all the three studied social networks, the countermeasure CM2 slightly outperforms CM1.

An ideal countermeasure should have the following three properties. Firstly, it must not demand significant changes in the graph structure or the updating rule. Secondly, it should be easy for the nodes to implement, e.g., it does not require the nodes to have a full knowledge of the graph structure or memorize the history of the process. Last but not the least, such an alternation should be implemented by the nodes instead of being enforced by a central entity, such as a government or an online social platform management team. The last criterion is crucial since it ensures that the countermeasures can be deployed through educating and informing the network members and leveraging their collective decision-making power rather than abusing the excessive power of a central entity which might need to violate fundamental human rights such as freedom of expression to impose the countermeasure.

Our suggested countermeasures clearly satisfy the last two properties, since for their implementation, a node $v$ needs to simply make some random friends in the network or give more weight to the preferences of the ``closer'' friends (i.e., nodes which share a bigger fraction of their neighbors with $v$). Whether the first property is fulfilled or not is up to interpretation. Note that the average degree in our studied social networks is between $43$ and $254$. Thus, requiring all nodes to make 10 random friends is somewhat equivalent to asking them to choose around $4-18\%$ of their neighbors at random. We believe that the second countermeasure is less intrusive and more practical since it only requires the nodes to give more weight to the preferences of the closer friends.

\textbf{Convergence Time in Random PD.} Figure~\ref{experiments-fig} (down-left) visualizes the convergence time of the Random PD model with $q=1/2$ on a cycle $C_n$ for different values of $n$, where in the initial profile each node picks an order, with three alternatives $a, b, c$, u.a.r. Comparing its growth with the function $n^2/5$ (i.e., \emph{Quadratic} line in the figure) suggests that it grows quadratically in $n$.
Figure~\ref{experiments-fig} (down-right) indicates that the convergence time on a complete graph and ER random graph matches the growth of the function $(n\log n)/5$ (i.e., \emph{Linear} line in the figure). This is consistent with the $\mathcal{O}(n\log n)$ bound mentioned in Section~\ref{rc} for graphs with strong expansion properties.

For the Facebook graph (with $n=4039$ ), the Twitter graph (with $n=81306$), and Gplus (with $n=107614$), our experiments output the convergence time of $1941$, $18491$, and $27798$, respectively. This is closer to the linear behavior of the strong expanders rather than the quadratic growth on cycles.

\section{Future Work}
\label{conclusion}

\textbf{Number of Solutions.} We proved tight bounds on the number of solutions $\mathcal{NS}(G)$ for different graphs, e.g. cycles and complete graphs. It would be interesting to determine the graph parameters which govern $\mathcal{NS}(G)$. We argued that regularity and vertex-transitivity are some potential candidates.

\textbf{Countermeasures.} We confirmed the effectiveness of our two proposed countermeasures to neutralize potential adversarial attacks (namely making random friends and giving more weight to the preferences of closer friends) by experimenting on graph data from Twitter, Facebook, and Gplus. It would be valuable to develop powerful countermeasures which satisfy the three properties listed in Section~\ref{experiments}.


\textbf{Convergence Time.} It is left as an open problem to determine the convergence time for the SPD model.

\textbf{Convergence In Directed Graphs.} We proved that the PD process on an undirected graph always reaches a fixed profile. Brill et al.~\cite{brill2016pairwise} proved this for special classes of directed graphs, such as acyclic graphs, but the problem was left open in the general case.


\newpage
\bibliographystyle{ACM-Reference-Format}
\bibliography{ref}

\newpage
\appendix

\section{Eliminated Proofs}
\subsection{Proof of Lemma~\ref{coin-flip}}
\label{coin-flip-appendix}
Let the random variable $x_i$, for $1\le i\le K$, denote the number of coin flips done from when we have seen the $(i-1)$-th head until the $i$-th head. Thus, we have $X=\sum_{i=1}^{K}x_i$. Using the linearity of expectation and $\mathbb{E}[x_i]=1/p$, we get $\mathbb{E}[X] = \mathbb{E}\left[\sum_{i=1}^K x_i\right]=\sum_{i=1}^{K}\mathbb{E}[x_i]=K/p$.

Since $x_i$'s are geometrically distributed, we have $Var[x_i]=(1-p)/p^2$. Thus, applying the fact that $x_i$'s are independent, we have
\begin{equation*}
    Var[X]=Var\left[\sum_{i=1}^{K} x_i\right]= \sum_{i=1}^{K} Var[x_i]= \frac{K(1-p)}{p^2}\le \frac{K}{p^2}.
\end{equation*}
Finally, setting $f=\mathbb{E}[X]/2=K/(2p)$ in the Chebyshev's inequality (see Section~\ref{preliminaries}) we conclude that
\begin{equation*}
    \mathbb{P}\left[\frac{K}{2p}< X<\frac{3K}{2p}\right]\ge 1- \frac{4Var[X]}{\mathbb{E}[X]^2}\ge 1-\frac{4Kp^2}{K^2p^2}=1-\frac{1}{4K}. 
\end{equation*}
\subsection{Tightness of Theorem~\ref{convergence-time}}
\label{convergence-time-tightness}
Let us construct an $n$-node graph $G$. Define $\kappa := \lfloor (n-3)/2 \rfloor$. Consider two cycles $C_w = w_1,\cdots, w_{\kappa}$ and $C_g = g_1,\cdots, g_{\kappa}$. Now add the edge set $\{\{w_i, g_i\}: 1\le i \le \kappa\}$. Furthermore, add a new node $w_{\kappa}^{\prime}$ and connect it to $w_{\kappa}$ as a leaf. Finally, create a clique of size $n-2\kappa-1$ (this is equal to 2 when $n$ is odd and 3 otherwise) and add an edge between $w_1$ and every node in this clique. (See Figure~\ref{figure} for an example.) We observe that this graph has $n$ nodes and $m \le 3\kappa+7=\Theta(n)$ edges (more precisely, $3\kappa+4$ when $n$ is odd and $3\kappa+7$ otherwise). According to Theorem~\ref{convergence-time}, the convergence time of the APD model on $G$ is upper-bounded by $\mathcal{O}(n^2\alpha^4)$ w.h.p. We show that the converge time of the APD model on $G$ can be as large as $(1/2)(\kappa+1) n{\alpha \choose 2}=\Omega(n^2\alpha^2)$.

\begin{figure}[h]
  \centering
  \includegraphics[width=0.5\linewidth]{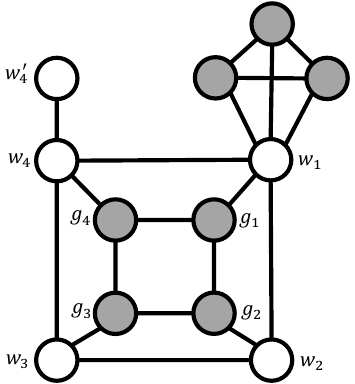}
  \caption{An example of the construction given to prove the tightness of convergence time in Theorem~\ref{convergence-time} (for $n=12$).}
  \label{figure}
  \Description{The construction given to prove the tightness of convergence time in the APD model for $n=12$.}
\end{figure}

Consider two alternatives $a,b\in \mathcal{A}$. Let $\succ_w$ and $\succ_g$ be two orders which are identical except that $\succ_w$ has $a$ in the first position and $b$ in the second while $\succ_g$ has $b$ in the first position and $a$ in the second. We say a node is colored white (resp. gray) if it has the order $\succ_w$ (resp. $\succ_g$). Consider the initial profile where all nodes in $C_w$ are colored white, all nodes in $C_g$ are gray, $w^{\prime}_{\kappa}$ is white, and the nodes in the clique are gray. (See Figure~\ref{figure} for an example.)

All nodes in $C_g$ and the clique remain gray (i.e., keep order $\succ_g$) forever since each of them has at least as many gray neighbors as white. In the initial profile, only node $w_1$ can update. It can update the order of $a$ and $b$ (i.e., change its color from white to gray). Once this happens, the only possible update is for $w_2$ to change from white to gray. With an inductive argument, the process eventually reaches a fully gray profile since nodes $w_1,\cdots,w_{\kappa}, w_{\kappa}^{\prime}$ become gray one after another. Before the process reaches the fully gray profile, in each round there is exactly one update possible which is chosen w.p. $1/(n{\alpha \choose 2})$ and $K=\kappa+1$ such updates need to be executed. Thus, by applying Lemma~\ref{coin-flip}, we can conclude that w.p. at least $1-1/(4(\kappa+1))=1-o(1)$, the process needs at least $(1/2)(\kappa+1) n {\alpha \choose 2}$ rounds to end.

Note that this leaves an $\alpha^2$ gap between the lower and upper bound. In the first glance, it might seem that if we set $\succ_g$ to be the reverse of $\succ_w$, then we get a stronger lower bound, in $\alpha$, but this would improve the lower bound only by a constant factor. Furthermore, the given construction has linearly many edges. It would be interesting to find such constructions for all values of $m$.

\subsection{Tightness of Theorem~\ref{min-cost}}
\label{min-cost-appendix}
We want to build an $n$-node graph $G$ which has a solution of cost $\sqrt{n}(\alpha-1)$. Consider a clique of size $\sqrt{n}$ and attach $\sqrt{n}-1$ leaves to each node in the clique. This graph has $\sqrt{n}+\sqrt{n}(\sqrt{n}-1)=n$ nodes. Assume that initially all nodes in the clique have the alternative $a$ as their first preference and all leaves have $a$ at the bottom of their order. Nodes in the clique will keep $a$ on top forever since they have as many neighbors in the clique (agreeing with them) than outside the clique. For the leaves, every time the alternative $a$ and the alternative on top of it are chosen, $a$ moves up, in both the APD and SPD model. Thus, eventually all nodes will have $a$ as their top choice. This implies that graph $G$ has a solution of cost $\sqrt{n}(\alpha-1)$ in both the APD and SPD model.

\subsection{Proof Sketch of Theorem~\ref{thm-expander}}
\label{thm-expander-appendix}
Let $t^*$ satisfy the equality $(1-f(\alpha))^{t^*}\bar{n}_{\succ}^{0}= (\log\log n)^2$. Then, $t^*= \log _{1/(1-f(\alpha))}(\bar{n}_{\succ}^{0}/(\log \log n)^2)=\mathcal{O}(\log n)$, where in the last equality we used $\bar{n}_{\succ}^{0}\le n$ and $1/(1-f(\alpha))$ is a constant larger than 1. Applying Proposition~\ref{proposition-expander} repeatedly implies that after $t$ phases for some $t\le t^*$, the process reaches a profile where at most $(\log \log n)^2$ nodes do not have the order $\succ$ w.p. at least $1-\sum_{i=1}^{t^*}\exp(-f(\alpha) (1-f(\alpha))^{i-1}\bar{n}_{\succ}^{0}/8)-\sum_{i=1}^{t^*} \exp(-\Theta(n))$. This probability is at least $1- \mathcal{O}(\log n) \exp(-f(\alpha)(\log\log n)^2/8)-\mathcal{O}(\log n)\exp(-\Theta(n))=1-o(1)$. Note that since each phase is a constant number of rounds, the number of rounds required is also in $\mathcal{O}(\log n)$. (To be fully precise, we would have needed to define an event $A_i$ for each application of Proposition~\ref{proposition-expander} and condition on events $A_1, \cdots, A_{i-1}$, but the calculations would still be the same.)

It remains to argue that from a profile where only $(\log\log n)^2$ nodes do not hold the order $\succ$, the process reaches a profile with at most $d/4$ such nodes in $\mathcal{O}(\log\log\log n)$ rounds w.h.p. (A few rounds after that, all nodes have $\succ$ since for every node at least a $3/4$ fraction of its neighbors hold $\succ$.) We do not provide the full proof here due to the space constraints, but sketch the main ideas. We first need to use Proposition~\ref{proposition-expander} and argue that the error probability is at most $1/4$. This can be done by using $\bar{n}_{\succ}^{t-1}\ge d/4$ and the fact that we can set $d\ge C$ for an arbitrarily large constant $C$ (since we assume $\lambda\le \beta$ and $\lambda\ge 1/\sqrt{d}$ (cf.~\cite{alon2016probabilistic}), choosing $\beta$ sufficiently small gives us a sufficiently large constant $C$). Thus, w.p. $3/4$ the number of nodes which do not have $\succ$ decreases by a constant factor $1-f(\alpha)$, otherwise it increases by a constant factor in the worst case. This can be modeled as a gambler who has $(\log\log n)^2$ dollars and in each round they lose a constant fraction of their \textit{current} money w.p. $3/4$ and gain a constant fraction w.p. $1/4$. Using Chernoff bound (see Section~\ref{preliminaries}), one can show that they will become bankrupt in $\mathcal{O}(\log\log\log n)$ rounds w.h.p. (To use Chernoff bound, we need to define the gambler's process in a way that it upper-bounds our original process, but it gives us the required independence. Another option, of course, is to use inequalities which allow some level of dependence, such as the Azuma-Hoeffding inequality~\cite{dubhashi2009concentration}, and work with the original process directly.)

\subsection{Conditions of Theorem~\ref{thm-expander}}
\label{limit-appendix}

\textbf{Bounds on $\alpha$, $\lambda$, and $d$.} In Theorem~\ref{thm-expander}, we assume that $\alpha$ is constant. Our proof can be adapted to cover the case of $\alpha$ growing ``slowly'' in $n$. From a practical perspective, this assumption is not too unreasonable since in reality the number of alternatives is much smaller than the number of individuals in a network. We also need the assumption that $\lambda\le \beta$, but this is not very restrictive either since we know there are graphs where $\lambda =\mathcal{O}(1/\sqrt{d})$, e.g. random $d$-regular graphs~\cite{alon2016probabilistic}. Furthermore, as mentioned in the proof, requiring $\lambda \le \beta$ for a sufficiently small constant $\beta$ imposes a constant lower bound on $d$ since $1/\sqrt{d}\le \lambda$ (see~\cite{alon2016probabilistic}).

\textbf{Irregular Graphs.} To avoid extra complications in the proof, we focused on regular graphs. However, we can utilize a more general variant of Lemma~\ref{mixing} (cf.~\cite{alon2016probabilistic}) to provide a similar statement for irregular graphs. The main difference is that the statement, especially the definition of Condorcet, needs to be modified to use the summation of the degree of nodes which hold an order $\succ$ rather than the number of such nodes.

\textbf{Asynchronous Model.} Theorem~\ref{thm-expander} covers only the synchronous set-up. To prove the analogous result for the asynchronous PD model, one possible approach is to break the process into \emph{super phases}, where a super phase is a sequence of $p(n)$ phases, for a suitable choice of $p(n)=\Theta(n)$, and then try to use the proof from Theorem~\ref{thm-expander} for super phases instead of phases. Another plausible approach is to use the techniques from the bounded biased random walk on the integers and the Azuma-Hoeffding inequality, see~\cite{dubhashi2009concentration}. However, we do not provide this proof here.

\subsection{Proof of Theorem~\ref{ns}}
\label{ns-appendix}

Let $\mathcal{S}_n$ be the set of all placements on a cycle $C_n = v_1,\cdots, v_n$ such that for every three consecutive nodes for at least two of them the first position is chosen. We claim that $\mathcal{S}_n$ is equal to the set of all solutions in both the APD and SPD model on $C_n$. Consider an arbitrary solution $S$. Build the profile $\mathcal{P}$ which puts the alternative $a$ in positions determined by $S$ and then put $b$ in the highest position possible. If $S$ is not in $\mathcal{S}_n$, there are three consecutive nodes $w_1, w_2, w_3$ so that at least two of them rank $b$ in the first position. If these two nodes are adjacent, say $w_1, w_2$, they keep $b$ as their first alternative forever, in both the APD and SPD model starting from $\mathcal{P}$, which is a contradiction. If the two nodes are not adjacent, namely $w_1, w_3$, then there is a non-zero probability that $w_1$ and $w_3$ do not update while $w_2$ compares $a$ and $b$ and moves $b$ on top. (Note that this is possible in the SPD model too since $w_1$ and $w_3$ might pick pairs which are not adjacent on their order.) Again, $w_1$ and $w_2$ will keep $b$ on top forever, which is a contradiction. Thus, $S$ must be in $\mathcal{S}_n$.

Now, consider a placement $S'$ in $\mathcal{S}_n$. We show that $S'$ is a solution. Consider a profile $\mathcal{P}$ where the alternative $a$ is placed according to $S'$. Every node which has $a$ in the first position has at least one neighbor which does the same. Therefore, it keeps $a$ on top forever, starting from $\mathcal{P}$ in both the APD and SPD model. All other nodes have two neighbors which rank $a$ first, thus $a$ keeps leveling up in those nodes until all nodes will have $a$ in the first position.

So far, we proved $\mathcal{S}_n$ is equal to the set of all solutions. Thus, $\mathcal{MC}(C_n)$ is equal to the minimum cost among the elements of $\mathcal{S}_n$. We claim that $\mathcal{MC}(C_n)=(2\lfloor n/3\rfloor+(n\mod 3))(\alpha-1)$. The term $2\lfloor n/3\rfloor$ should be trivial according to the definition of $\mathcal{S}_n$ and the additive part $(n\mod 3)$ can be easily attained by a case distinction.

We want to determine $s(n):=|\mathcal{S}_n|$. Let $p(n)$ be $s(n)$ but for a path $P_n$ instead of a cycle $C_n$. We observe that $p(n-2)\le s(n)\le p(n)$. The upper bound is trivial; the lower bound is true because if we choose the first position in $v_{n-1}$ and $v_{n}$ in $C_n$ and let the rest of nodes, i.e., $v_1,\cdots, v_{n-2}$, copy the positions from a solution on a path of length $n-2$, then the outcome is a solution on $C_n$. Thus, to approximate $s(n)$, it suffices to calculate $p(n)$. For that, we need to solve the homogeneous linear difference equation $p(n)=p(n-1)+(\alpha-1)p(n-3)$ for $p(3)= 3\alpha-2, p(4)=\alpha^2+2\alpha-2, p(5)=3\alpha^2-\alpha-1$.
Consider a path of length $n$. If we choose the first position in the first node, then there are $p(n-1)$ ways to choose the positions in the remaining $n-1$ nodes. If we choose any of the other $\alpha-1$ positions in the first node, the second and third nodes are forced to pick the first position and the positions in the remaining $n-3$ nodes can be selected in $p(n-3)$ ways. This is why the above equation holds. It is straightforward to calculate the initial conditions $p(3)$, $p(4)$, and $p(5)$.

By solving the characteristic equation $\lambda^3-\lambda^2-(\alpha-1)=0$ (derived from $p(n)=p(n-1)+(\alpha-1)p(n-3)$) and some standard calculations, we get $p(n)=\Tilde{\Theta}(\psi^n)$ for some $\psi\in (\alpha^{1/3}, \alpha^{1/3}+0.22)$, where $\psi$ converges to $\alpha^{1/3}$ when $\alpha$ grows. (The details are left out; please see~\cite{greene1990mathematics} for more details on solving linear difference equations.) Finally, since $p(n-2)\le s(n)\le p(n)$ and we hide polynomial terms in $\alpha$ in $\Tilde{\Theta}$, we conclude that $s(n)=\Tilde{\Theta}(\psi^n)$.

\subsection{Proof Sketch of Part III in Theorem~\ref{rc-thm}}
\label{rc-appendix}

For a graph $G$, assume that initially each node has a token. Then, in each round, the tokens in a node $v$ update their location according to the following rule: w.p. $1-q$, for some constant $0<q<1$, they all stay in $v$ and w.p. $q$, they pick one of $v$'s neighbors u.a.r. and all move there. The \emph{meeting time} of this process, called \emph{tokens random walk}, is the expected number of rounds required for all tokens to arrive at the same node.

We claim that the convergence time of the Random PD model on a graph $G$ is upper-bounded by the meeting time of the tokens random walk process on $G$. For that, we need to look at the Random PD process in the reverse order. Let's assume that each node has a token in round $t$. Then, for each node we move its token to the node that it got its order from in the round $t$ (it could be the node itself). Now, for each node which has some tokens, we move them to the node that it got its order from in the round $t-1$ and we continue this procedure. Since each node chooses its own order w.p. $1-q$ and otherwise picks the order of one of its neighbors u.a.r., this process is identical to the tokens random walk process. (Please refer to~\cite{lovasz1993random} for more details on reversible Markov chains.) Furthermore, if the token which originated from a node $v$ is located on a node $u$ in some round $t'$, then it means that the final order of $v$ is equal to the order of $u$ in round $t'$. If all tokens reach the same node, the final order of all nodes is equal to the order of that node. This is why the convergence time of the Random PD process is bounded by the meeting time of the tokens random walk process.

According to~\cite{coppersmith1993collisions}, we know that the meeting time can be bounded by $\mathcal{O}(n^4)$. This bound is actually proven in a more restrictive set-up, where a demon, whose goal is to maximize the meeting time, selects the token(s) which move in each round. Furthermore, they conjecture that the bound can be improved to $\mathcal{O}(n^3)$. Finally, it is worth emphasizing that for special classes of graphs better bounds are known. For example, on graphs with strong expansion properties, the meeting time is in $\mathcal{O}(n\log n)$, cf.~\cite{cooper2010multiple}.

\end{document}